\documentclass[sigconf]{acmart}
\settopmatter{printacmref=false}
\renewcommand\footnotetextcopyrightpermission[1]{}  
\setcopyright{none}


\usepackage{multirow}
\usepackage{needspace}
\newtheorem{proposition}{Proposition}

\usepackage{cleveref}
\crefname{equation}{}{}

\AtBeginDocument{%
  }

\setcopyright{none}
\copyrightyear{}
\acmYear{}
\acmDOI{}
\acmConference[Manuscript]{Submitted Paper}{August 2025}{Under Review}
\acmISBN{}



\usepackage{makecell}

\begin{document}

\title{Graph Learning for Foreign Exchange Rate Prediction and Statistical Arbitrage}


\author{Yoonsik Hong}
\affiliation{%
  \institution{Northwestern University}
  \city{Evanston}
  \state{Illinois}
  \country{USA}
}
\email{YoonsikHong2028@u.northwestern.edu}

\author{Diego Klabjan}
\affiliation{%
  \institution{Northwestern University}
  \city{Evanston}
  \state{Illinois}
  \country{USA}
}
\email{d-klabjan@northwestern.edu}

\renewcommand{\shortauthors}{Hong and Klabjan}

\begin{abstract}
We propose a two-step graph learning approach for foreign exchange statistical arbitrages (FXSAs), addressing two key gaps in prior studies: the absence of graph-learning methods for foreign exchange rate prediction (FXRP) that leverage multi-currency and currency–interest rate relationships, and the disregard of the time lag between price observation and trade execution. 
In the first step, to capture complex multi-currency and currency–interest rate relationships, we formulate FXRP as an edge-level regression problem on a discrete-time spatiotemporal graph. 
This graph consists of currencies as nodes and exchanges as edges, with interest rates and foreign exchange rates serving as node and edge features, respectively.
We then introduce a graph-learning method that leverages the spatiotemporal graph to address the FXRP problem. 
In the second step, we present a stochastic optimization problem to exploit FXSAs while accounting for the observation-execution time lag. 
To address this problem, we propose a graph-learning method that enforces constraints through projection and ReLU, maximizes risk-adjusted return by leveraging a graph with exchanges as nodes and influence relationships as edges, and utilizes the predictions from the FXRP method for the constraint parameters and node features.   
Moreover, we prove that our FXSA method satisfies empirical arbitrage constraints. 
The experimental results demonstrate that our FXRP method yields statistically significant improvements in mean squared error, and that the FXSA method achieves a 61.89\% higher information ratio and a 45.51\% higher Sortino ratio than a benchmark. 
Our approach provides a novel perspective on FXRP and FXSA within the context of graph learning.

\end{abstract}

\begin{CCSXML}
<ccs2012>
   <concept>
       <concept_id>10010147.10010257.10010293.10010294</concept_id>
       <concept_desc>Computing methodologies~Neural networks</concept_desc>
       <concept_significance>500</concept_significance>
       </concept>
 </ccs2012>
\end{CCSXML}

\ccsdesc[500]{Computing methodologies~Neural networks}

\keywords{Foreign Exchange, Graph Learning, Statistical Arbitrage, Deep Learning}


\maketitle
\section{Introduction}
\label{sec:intro}
We address the problem of predicting foreign exchange (FX) rates and exploiting FX statistical arbitrage (StatArb) opportunities.  
Consider a currency triplet: A, B, and C.  
We first exchange one unit of currency A for B, then convert the resulting amount of B to C, and finally exchange the resulting amount of C back into A.  
In theory, the final amount of A should be one~\cite{cui2020detecting}, implying inherent interrelationships among FX rates.   
However, in practice, it sometimes exceeds one~\cite{aiba2002triangular}, allowing potential profits.  
In this paper, we predict FX rates by leveraging the interrelationships among currencies and exploit FX StatArbs (FXSAs) involving three or more currencies, which arise in practice.

Although the currencies in the triplet are theoretically interrelated—both among themselves and with interest rates (IRs)—no prior work has studied a graph learning (GL) method to exploit these relationships in FX rate prediction (FXRP).  
As multiple currencies exist globally, combinatorially many triplets can be formed, each inducing its own theoretical interrelations.  
It is difficult to represent such structured relationships in a tabular format and to employ canonical neural architectures, such as multi-layer perceptrons (MLPs). 
Moreover, the interest rate parity (IRP) implies there is an interaction between the IRs of two countries and their FX rate.  
To effectively capture and leverage both triplet-based relationships and the IRP, we leverage GL for FXRP.

While the spot price and the actual trading price can differ due to the time lag between price observation and trade execution, such discrepancies have not been addressed in the existing literature (e.g., \cite{zhang2025efficient,soon2007currency,tatsumura2020currency}) on FX arbitrage. 
This omission renders prior methods potentially inadmissible in practice and subject to look-ahead bias.
Moreover, the price discrepancies introduced by the observation-execution time lag pose risks to investors’ trading, which have likewise been overlooked. 
To account for these discrepancies and the associated risks, we formulate a stochastic optimization problem for FXSA and propose a GL-based method to solve it.

To address multi-currency relationships, currency–IR interactions, and price discrepancies arising from execution time lags, we develop a two-step GL approach. 
In each step, we formulate a problem and set forth a corresponding GL method.

We formulate FXRP as an edge-level regression problem on a discrete-time spatiotemporal graph. 
We define the graph with currencies and exchange relations as nodes and edges, respectively, where IRs and FX rates serve as node and edge features, respectively. 
We train a simple graph neural network (GNN) on this graph.  
The GNN also uses currency-value features we derive from FX rates under the maximum likelihood estimation (MLE) principle.

To exploit FXSAs, we formulate a stochastic optimization problem that accounts for the time lag between price observation and trade execution. 
We then propose a GL method that generates FX trading quantities to address it. 
The GL method solves the problem by enforcing constraints via projection and ReLU, and maximizing risk-adjusted return, while utilizing the predictions from the FXRP method for the constraint parameters and node features. 
We prove our FXSA method guarantees the satisfaction of the constraints. 
We construct a graph with exchanges as nodes and influence relationships—derived from the projection matrix—as edges, and train a GNN on the graph.

We demonstrate the effectiveness of our method on ten of the most traded currencies. 
Our FXSA method outperforms a benchmark across key risk-adjusted return metrics, exhibiting a steady upward trend in cumulative profits while reducing risk. 
Our FXRP results show that incorporating the FX graph with IRs leads to statistically significant improvements over non-graph-based methods.

The prominent contributions of this paper are as follows. 
\begin{itemize}
    \item To the best of our knowledge, we are the first to propose a spatiotemporal GL problem for FXRP using FX rates and government bond IRs, and present a dedicated GL method.
    \item To the best of our knowledge, we are the first to formulate FXSA as a stochastic optimization problem, propose a corresponding GL method that generates trading strategies to exploit arbitrages among three or more currencies, and prove that our FXSA method satisfies the empirical arbitrage constraints, thereby reducing risk. 
    \item We demonstrate the effectiveness of our approach: our FXSA method yields a 61.89\% higher information ratio and a 45.51\% higher Sortino ratio than a benchmark, and the FXRP method shows statistically significant improvement over non-graph models.
\end{itemize}

\section{Related Work}

Existing learning-based studies overlook the interrelationships among FX rates. 
Prior econometric studies on FXRP commonly use models such as 
VAR \cite{carriero2009forecasting} and GARCH \cite{pahlavani2015comparison}. 
However, these methods require underlying statistical assumptions and often fail to capture nonlinear dynamics \cite{ajumi2017exchange}. 
To address these limitations, learning-based methods continue to gain traction. 
Deep learning models (e.g., Transformer \cite{ullah2025forecasting}, LSTM \cite{islam2021foreign}, CNN \cite{ullah2025forecasting,galeshchuk2017deep}) and machine learning algorithms (e.g., gradient boosting \cite{zhao2020wavelet}, SVM \cite{de2018combining,rupasinghe2019forecasting}) are being actively explored. 
Researchers apply such methods to diverse currency pairs (e.g., \texttt{GBPJPY} \cite{rupasinghe2019forecasting}, \texttt{USDCNY} \cite{ullah2025forecasting}, \texttt{USDCAD} \cite{islam2021foreign}), with major pairs like \texttt{USDEUR}~\cite{ahmed2020flf,islam2021foreign,rupasinghe2019forecasting,pradeepkumar2014forex}, 
\texttt{USDJPY}~\cite{pradeepkumar2014forex,rupasinghe2019forecasting}, \texttt{USDGBP}~\cite{islam2021foreign,rupasinghe2019forecasting,pradeepkumar2014forex}, 
\texttt{USDCHF}~\cite{islam2021foreign} receiving the most attention~\cite{ayitey2023forex}.
However, these studies overlook the relationships among currency triplets---discussed in \Cref{sec:intro}---despite their strong theoretical basis in economics~\cite{cui2020detecting,aiba2002triangular}.

Prior studies do not consider the macroeconomic factors of currencies comprehensively. 
Theories such as the IRP and purchasing power parity delineate how an FX rate for a currency pair relates to macroeconomic factors \cite{taylor1995economics}. 
To utilize these theories, prior studies incorporate macroeconomic features of a target FX pair, such as IRs \cite{amat2018fundamentals,pfahler2022exchange,cao2020deep,zhang2020predictability}, inflation rates \cite{pfahler2022exchange,zhang2020predictability}, and monetary policies \cite{pfahler2022exchange,cao2020deep}.
However, the macroeconomic parity valuation of a currency can vary depending on the counterpart currency, due to market frictions (e.g., creditworthiness \cite{tuckman2003interest}, banking regulation \cite{du2018deviations}). 
Thus, accurate FXRP requires comprehensively leveraging macroeconomic factors of both the target pair and other currencies---an aspect overlooked by prior learning-based studies.

The time lag between price observation and execution is inevitable in practice, but prior studies overlook it and its associated risks. 
Several studies show the existence of FX arbitrage opportunities in real markets~\cite{aiba2002triangular,fenn2009mirage,mahmoodzadeh2020spot}.  
To exploit them, prior approaches employ integer programming~\cite{soon2007currency}, the simulated bifurcation algorithm~\cite{tatsumura2020currency}, and a GNN~\cite{zhang2025efficient} to solve arbitrage optimization problems.  
However, these studies disregard the time lag between price observation and execution, rendering them susceptible to look-ahead bias and the risk of price movements between the observation and execution time points. 
To be practically applicable, such methods need to provide very accurate FXRP or address the risks stemming from the time lag.
In addition, \cite{zhang2025efficient} is prone to violating flow-in-flow-out constraints due to its use of a relaxed loss with a squared penalty to satisfy the constraints, and it uses synthetic data. 
Unlike the questionable assumptions of \cite{zhang2025efficient}, our problem formulation accounts for the time lag and FX rate stochasticity. 
Moreover, our method provably satisfies all constraints, utilizes predictions from our FXRP method to address the time lag and stochasticity, and is evaluated on real market data.

To address the limitations of the prior studies, we propose a two-stage GL approach.  
The first formulates FXRP as a GL problem and presents a GL method, capturing interrelationships among FX rates and comprehensive interest-rate–based valuations.  
Building on these predictions, the second formulates FXSA as a stochastic optimization problem that addresses the observation–execution time lag and its associated risks. It also proposes a GL method that provably satisfies the empirical arbitrage constraints.

\section{Notation and Preliminaries} 
\textbf{Notation.} 
We map all time points on trading dates to $\mathcal{T} = (0, \infty)$ while preserving their temporal order, such that each natural number corresponds to the end time point (24:00) of a trading date.  
We use this end time to represent the trading date itself. 
That is, for a time point $\tau \in \mathcal{T}$, the ceiling $\lceil \tau \rceil \in \mathbb{N}$ denotes the corresponding trading date. 
We let $t\in\mathbb{N}\subset\mathcal{T}$ and $\tau\in\mathcal{T}$ denote a trading date and a time point, respectively.

We consider a single agent investing in the FX market. 
Let $C$ be the set of investable currencies, where $o \in C$ denotes the agent’s home currency.  
Let $Y_{\tau, \Delta \tau, i}\in\mathbb{R}$ and $Y_{\tau, \Delta \tau, i}^\mathrm{f}\in\mathbb{R}$ denote the government bond and risk-free IRs of currency $i\in C$, respectively, quoted at time $\tau \in \mathcal{T}$ for the period from $\tau$ to $\tau + \Delta \tau$. 
For $\tau \in \mathcal{T}$ and $i,j \in C$, let $X_{\tau ij}>0$ be a random variable denoting the FX rate from $i$ to $j$ at $\tau$, defined as the amount of $j$ per unit of $i$. 
Let $E_\tau\subset C \times C$ be the set of currency pairs that can be directly exchanged at $\tau$. 
For each $(i,j)\in E_\tau$, let $w_{\tau ij} \geq 0$ denote the amount, in units of currency~$o$, that a trading strategy allocates to exchange from $i$ to $j$.

We use the tilde symbol to denote the realization of a random variable, and the hat symbol to denote the agent’s estimate of its expectation.  
For any vectors $\mathbf{x}_i \in \mathbb{R}^p$ with $i \in I$, where $I$ is a finite index set, we denote by $[\mathbf{x}_i]_{i \in I} \in \mathbb{R}^{|I| \times p}$ the matrix formed by stacking $\mathbf{x}_i$ row-wise.  
For any $a, b \in \mathbb{R}$, we define $[a]^+ = \max\{a, 0\}$. We let $[a:b] = [a, b] \cap \mathbb{Z}$ and use similar notation for open or half-open intervals.

\textbf{Preliminaries.} 
An FX direct arbitrage refers to a riskless cash flow arising from a violation of  
\begin{equation}
    X_{\tau ij} X_{\tau ji} = 1 \label{eq:direct_arb}
\end{equation}
for $\tau\in\mathcal{T}$ and $i,j\in C$. 
If $X_{\tau ij} X_{\tau ji} > 1$ is known in advance, the agent can earn a profit of $X_{\tau ij} X_{\tau ji} - 1$ by converting one unit of currency~$i$ into $j$ and immediately  converting it back to~$i$ at $\tau$. 
Throughout this paper, we assume \cref{eq:direct_arb} holds to focus on more intriguing arbitrages involving more than two currencies. It is easy to incorporate such two-currency arbitrages in what follows.

An FX triangular arbitrage refers to a riskless cash flow that arises from the violation of  
\begin{equation}
    X_{\tau ij}=X_{\tau ik} X_{\tau kj} \label{eq:triangular_arb}
\end{equation}
for $\tau \in \mathcal{T}$ and $i,j,k \in C$. 
If both \cref{eq:direct_arb} and $X_{\tau ij}<X_{\tau ik}X_{\tau kj}$ hold and the agent knows this in advance, then, since $X_{\tau ik}X_{\tau kj}X_{\tau ji} \allowbreak > 1$, the agent can earn a profit of $X_{\tau ik}X_{\tau kj}X_{\tau ji} \allowbreak -1$ by sequentially exchanging from $i$ to $k$, from $k$ to $j$, and from $j$ to $i$. 
There is evidence that \cref{eq:triangular_arb} is violated in practice \cite{aiba2002triangular}, although it should hold in theory \cite{cui2020detecting}, implying underlying multi-currency relationships. 

Extending FX triangular arbitrage to cases involving more than three currencies, we consider an LP problem \cref{eq:prelim_arb_obj}--\cref{eq:prelim_nonneg}---adapted from \cite{soon2007currency}.  
For $\tau'', \tau', \tau \in \mathcal{T}$ with $\tau'' < \tau' < \tau$,  
suppose the agent observes $[\widetilde{X}_{\tau'' ij}]_{(i,j) \in E_{\tau''}}$ at time~$\tau'$ and knows, in advance at $\tau'$, that $\widetilde{X}_{\tau ij}= \widetilde{X}_{\tau'' ij}$ for all $(i, j) \in E_\tau \cap E_{\tau''}$. 
Then, the agent can estimate the expectation of $X_{\tau ij}$ as $\widehat{X}_{\tau ij}=\widetilde{X}_{\tau'' ij}$, solve the LP problem \cref{eq:prelim_arb_obj}--\cref{eq:prelim_nonneg}, and execute the solution at $\tau$ to maximize the profit from FX arbitrages. 
In \cref{eq:prelim_arb_obj}, the agent maximizes the profit defined as the difference between total cash inflows into $o$ and cash outflows from it.  
In \cref{eq:prelim_arb_flow_conserv}, the agent does not retain any nonzero amount in currencies other than $o$ to avoid the risk associated with holding a currency after trading, which implies the net cash flow for each $i \in C \setminus \{o\}$ is zero. 
The first and second terms in \cref{eq:prelim_arb_flow_conserv} represent the total inflows to $i$ and the total outflows from $i$, respectively. 
Since $w_{\tau ij}$ is in units of $o$, the second term converts it to $\widehat{X}_{\tau oi}w_{\tau ij}$ in units of $i$.
In the first term, the agent exchanges $\widehat{X}_{\tau oj}w_{\tau ji}$, in units of $j$, for $\widehat{X}_{\tau oj}\widehat{X}_{\tau ji}w_{\tau ji}$, in units of $i$. 
Equation \Cref{eq:prelim_arb_weight_sum} imposes the unit trading constraint. 
If a different amount is traded, the solution can be scaled proportionally. 
\begin{align}
    &\max_{w_{\tau ij}:(i,j)\in  E_\tau \cap E_{\tau''}} \sum_{j:(j,o)\in E_\tau \cap E_{\tau''}}  w_{\tau jo} - \sum_{j:(o,j)\in E_\tau \cap E_{\tau''}} w_{\tau oj} \label{eq:prelim_arb_obj}\\
    &~\text{s.t.} \notag\\
    & \sum_{j:(j,i)\in E_\tau \cap E_{\tau''}} \widehat{X}_{\tau oj} \widehat{X}_{\tau ji} w_{\tau ji} - \sum_{j:(i,j)\in E_\tau \cap E_{\tau''}} \widehat{X}_{\tau oi}w_{\tau ij} =0,\forall i \in C\setminus\{o\}, \label{eq:prelim_arb_flow_conserv}\\
    & \sum_{(i,j)\in E_\tau \cap E_{\tau''}} w_{\tau ij} = 1, \label{eq:prelim_arb_weight_sum}\\
    & ~w_{\tau ij}\geq 0, \forall (i,j) \in E_\tau \cap E_{\tau''}. \label{eq:prelim_nonneg}
\end{align}

Since the assumption $\widetilde{X}_{\tau ij}= \widetilde{X}_{\tau'' ij}$ in \cref{eq:prelim_arb_obj}--\cref{eq:prelim_nonneg} does not hold in practice, we need to predict $X_{\tau ij}$.  
To that end, we utilize IRs as input features based on the following IRP theories. 
Under the arbitrage-free assumption and ideal conditions~\cite{du2018deviations}, the covered IRP implies that for $\tau \in \mathcal{T}$ and $i,j\in C$,
$(1+Y_{\tau,\Delta \tau, i}^\text{f})  F_{\tau,\Delta \tau,i,j}=(1+Y_{\tau,\Delta \tau,j}^\text{f})  X_{\tau ij}$, 
where $F_{\tau,\Delta \tau, i, j}$ is the forward FX rate at $\tau$ from $i$ to $j$ with maturity at $\tau + \Delta \tau$.
If $F_{\tau,\Delta \tau, i, j}$ is replaced by the conditional expectation of $X_{\tau ij}$, then the covered IRP identity becomes the uncovered IRP identity.  
Both versions of IRP imply FX rates are theoretically related to IRs.

\section{Proposed Method}
To determine daily FXSA trade lists, we propose a two-step GL approach: one for FXRP and the other one for FXSA. 
The first yields FXRP model $f_P(\cdot;\theta_t^P)$, and the second yields FXSA model $f_S(\cdot;\theta_t^S)$ for $t\in[t_1:\infty)$, where $t_1\in\mathbb{N}$ is the date when the agent starts trading.
Here, $\theta_t^P$ and $\theta_t^S$ are trainable parameters. 
For each $t \in [t_1:\infty)$, $f_P(\cdot; \theta^P_t)$ predicts $\widehat{X}_{tij}$ for $X_{tij}$ using available information up to time $t - 1$. 
Then, $f_S(\cdot; \theta^S_t)$ determines trading quantities $w_{tij}$ based on the same information and the predictions $\widehat{X}_{tij}$ from $f_P$ by time $t - t_{\text{exec}}$, where $t_{\text{exec}} \in (0,1)$ denotes the time required to execute FX trades at $t$.  
The agent simultaneously executes all trades $w_{tij}$ at $t$, holds the resulting currencies until $t+1$, and converts all holdings to the home currency $o$ at $t+1$.

The FXRP and FXSA models are trained on certain trading dates and used until the next training date.  
Given hyperparameters $t_1$ and $ n_{\text{fit}} \in \mathbb{N}$, we define a sequence $\{t_k\}_{k=0}^{n_\text{fit}+1} \subset \mathbb{N} \cup \{\infty\}$ such that $t_0 = 1$, $t_k < t_{k+1}$ for all $k \in [0:n_\text{fit}]$, and $t_{n_\text{fit}+1} = \infty$.  
Then, we define the testing periods $\mathcal{T}_k^{\text{te}} = [t_k : t_{k+1})$ for all $k \in [1:n_\text{fit}]$, forming a partition of $[t_1 : t_{n_\text{fit}+1})$. 
For each $k\in [1:n_\text{fit}]$, during the period $(t_{k}-1,t_k-t_\text{exec})$, we train $f_P(\cdot; \theta^P_k)$ and $f_S(\cdot; \theta^S_k)$ using the training and validation periods $(\mathcal{T}_k^{P,\text{tr}},\mathcal{T}_k^{P,\text{va}})$ and $(\mathcal{T}_k^{S,\text{tr}},\mathcal{T}_k^{S,\text{va}})$, respectively, where $\mathcal{T}_k^{P,\text{tr}}\cup\mathcal{T}_k^{P,\text{va}}=\mathcal{T}_k^{S,\text{tr}}\cup\mathcal{T}_k^{S,\text{va}}=[t_0:t_k)$ and $\mathcal{T}_k^{P,\text{tr}}\cap\mathcal{T}_k^{P,\text{va}}=\mathcal{T}_k^{S,\text{tr}}\cap\mathcal{T}_k^{S,\text{va}}=\emptyset$. 
We then fix $\theta_t^P = \theta_k^P$ and $\theta_t^S = \theta_k^S$ for $t \in \mathcal{T}_k^{\text{te}}$.

\subsection{Foreign Exchange Rate Prediction}
\label{subsec:method_pred}

We formulate FXRP as an edge-level regression problem on a discrete-time spatiotemporal graph.  
For each $t \in \mathcal{T}$, we define a graph $\mathcal{G}_t^P=(C,\allowbreak E_t,\allowbreak [Y_{t,1,i}]_{i\in C},\allowbreak [X_{tij}]_{(i,j)\in E_t})$,  
whose components represent the node set, edge set, node features, and edge features, respectively.  
We define $\mathcal{G}_{1:t}^P$ as a collection $\{\mathcal{G}^P_{t'}:t' \in [1:t]\}$. 
We define $f_P(\mathcal{G}_{1:t-1}^P;\theta^P_t)=h_{PO}(g_{P}(h_{PI}(\mathcal{G}_{1:t-1}^P);\theta^P_t))=[\widehat{X}_{tij}]_{(i,j)\in \mathcal{U}_t}$ as the prediction function. 
Here, $h_{PO}$ and $h_{PI}$ are output-scaling and feature-engineering functions, respectively, and $g_P$ is a GNN with parameters $\theta_t^P$.
We formulate the FXRP problem as training $f_P$ to predict $\widetilde{X}_{tij}$ through the following loss:
\begin{align}
    &\min_{\theta^P_{k}} \sum_{t\in \mathcal{T}_k^{P,\text{tr}}} d_{P}\left( 
    g_P(h_{PI}(\mathcal{G}_{1:t-1}^P);\theta^P_{k}),
    h^{-1}_{PO}\left([\widetilde{X}_{tij}]_{(i,j)\in \mathcal{U}_t}\right)
    \right) \label{eq:pred:loss} 
\end{align}
for each $k\in[1:n_\text{fit}]$, 
where $d_{P}$ is a metric, and $\mathcal{U}_t\subset E_t$ is the set of predicted edges. 
We select the best among $n_\text{hyper}$ hyperparameter configurations via grid search, based on performance on $\mathcal{T}_k^{P,\text{va}}$. 
In our formulation, GNNs can capture multi-currency and currency–IR relationships through the graphs incorporating both FX rates and IRs.

To extract temporal information, we define $h_{PI}(\mathcal{G}_{1:t}^P)=\mathcal{G}_{t}^{PF}=(C,\allowbreak L_{t} \cap L_{t-1},\allowbreak 
[\textbf{c}_{ti}]_{i\in C}, \allowbreak 
[\textbf{x}_{tij}]_{(i,j)\in {L_{t} \cap L_{t-1}}})$. 
Here, we define $L_t=\{(i,j)\in E_t:(j,i)\in E_t\}$. 
As \cref{eq:direct_arb} is assumed, we consider only reciprocal edges, i.e., $L_t$ instead of $E_t$. 
We define $\mathcal{T}(i,j,t,t')=\{t''\in[t-t'+1:t]: (i,j) \in L_{t''} \cap L_{t''-1}\}$ 
as the set of dates $t''$ in $[t - t' + 1 : t]$ such that both exchanges $(i,j)$ and $(j,i)$ are directly tradable on both $t''$ and $t''-1$. 
We define $\mathcal{T}^\text{R}\subset\mathbb{N}$ as a set of look-back windows, which are hyperparameters.  
We define $\textbf{x}_{tij}=[x_{tijt'}]_{t' \in \mathcal{T}^\text{R}}$ 
where $x_{tijt'} = \frac{1}{|\mathcal{T}(i,j,t,t')|}\sum_{t''\in\mathcal{T}(i,j,t,t')}  \log ({\widetilde{X}_{t''ij}}/{\widetilde{X}_{t''-1,i,j}})$. 
Here, $x_{tijt'}$ denotes the average temporal log difference, $\log ({\widetilde{X}_{t''ij}}/{\widetilde{X}_{t''-1,i,j}})$, computed over the past $t'$ dates (i.e., $t''\in [t-t'+1:t]$) during which both $(i,j)$ and $(j,i)$ are directly tradable.  
The vector $\textbf{x}_{tij}$ collects these averages across different look-back windows $t'\in\mathcal{T}^\text{R}$. 
We define $[\textbf{c}_{ti}]_{i\in C}=[\textbf{y}_{ti} ; \textbf{v}_{ti}]\in\mathbb{R}^{2|\mathcal{T}^\text{R}|}$ where $\textbf{y}_{ti}=[y_{tit'}]_{t' \in \mathcal{T}^\text{R}}$ and  
$\textbf{v}_{ti}=[v_{tit'}]_{t' \in \mathcal{T}^\text{R}}$. 
Here, we define $y_{tit'}=\frac{1}{t'} \sum_{t''=t-t'+1}^{t} \allowbreak \log \left((1+\widetilde{Y}_{t'',1,i})/(1+\widetilde{Y}_{t''-1,1,i})\right)$ and $v_{tit'}=\frac{1}{t'}\sum_{t''=t-t'+1}^{t}\allowbreak\log \allowbreak ({\widetilde{V}_{t'',i}}/ \allowbreak {\widetilde{V}_{t''-1,i}})$ 
as the average temporal log differences of the interest rate and the currency value, respectively, over the past $t'$ dates, where $\widetilde{V}_{ti}$ is discussed next. 
These averages across different look-back windows $t'\in\mathcal{T}^\text{R}$ are collected in the vectors $\textbf{y}_{ti}$ and $\textbf{v}_{ti}$.

We define $\widetilde{V}_{ti}$ as the least-squares solution to \cref{eq:v:diff,eq:v:avg}:
\begin{align}
    & \log{\widetilde{V}_{ti}}-\log{\widetilde{V}_{tj}}=\log{\widetilde{X}_{tij}}, \forall (i,j)\in L_t: i<j, \label{eq:v:diff}\\
    & \frac{1}{|C|} \sum_{i\in C} \log \widetilde{V}_{ti} = 0, \label{eq:v:avg}
\end{align}
which is MLE in the market. 
Let $V_{ti}$ denote the value of currency $i$ in units of an imaginary currency $o^*$. 
From \cref{eq:triangular_arb}, for each $(i,j)\in L_t$, the exchange rate is modeled as $X_{tij}=\exp(\alpha_{tij}) \cdot V_{ti}/V_{tj}$, or equivalently,
$\log X_{tij}=\log V_{ti} - \log V_{tj}+ \alpha_{tij}$. 
Since $\alpha_{tij} \neq 0$ implies arbitrage, and such opportunities are expected to be exploited \cite{varian1987arbitrage,langenohl2018sources}, we assume $\mathbb{E}\alpha_{tij} = 0$. 
For analytical tractability, we assume $\alpha_{tij} \sim \mathcal{N}(0, \sigma_\alpha^2)$. 
MLE maximizes the log-likelihood of $\alpha_{tij}$: $\max~[-\frac{|L_t|}{2} \log(2\pi\sigma_\alpha^2)-\frac{1}{2\sigma_\alpha^2}\sum_{(i,j)\in L_t}(\log \widetilde{V}_{ti} - \log \widetilde{V}_{tj}-\log \widetilde{X}_{tij})^2]$. 
Its maximizer is the least-squares solution to \cref{eq:v:diff}, where the unknowns are $\log \widetilde{V}_{ti}$ and $\log \widetilde{V}_{tj}$.   
As \cref{eq:direct_arb} is assumed, we restrict \cref{eq:v:diff} to cases where $i < j$. 
Since the coefficient matrix of \cref{eq:v:diff} is a complete incidence matrix, it has rank $|C|-1$ \cite{bapat2014graphs}, and thus we add \cref{eq:v:avg}. 
Then, \cref{eq:v:diff,eq:v:avg} have rank $|C|$ \footnote{Since the left-hand side of \cref{eq:v:diff} is zero when $\log V_{ti}=1/|C|$ for all $i\in C$, the coefficient vector of \cref{eq:v:avg} is orthogonal to the row space of \cref{eq:v:diff}, and thus linearly independent of its rows.}, implying a unique least-squares solution always exists.

We define $h^{-1}_{PO}([X'_{tij}]_{(i,j)\in \mathcal{U}_t})=[\log(X'_{tij}/\widetilde{X}_{t-1,ij})]_{(i,j)\in \mathcal{U}_t}$ and define $h_{PO}$ as the inverse of $h^{-1}_{PO}$. 
We set $\mathcal{U}_t=L_t \cap L_{t-1} \cap L_{t-2}$ since $h_{PI}(\mathcal{G}_{1:t-1}^P)$ uses temporal differences between $t-1$ and $t-2$.

We use a simple GNN $g_P$ that handles both node and edge features. 
Given a graph $\mathcal{G}^{0}=(N,E,[\textbf{n}^0_i]_{i\in N},[\textbf{e}^0_{ij}]_{(i,j)\in E})$,   
we define $g_{C,l}(\mathcal{G}^{l-1})=\mathcal{G}^{l}=(N,E,[\textbf{n}^l_i]_{i\in N},[\textbf{e}^l_{ij}]_{(i,j)\in E})$ for each $l \in [1:L]$, where
\begin{align}
    &\textbf{n}^l_i= \frac{1}{|N(i)|}\sum_{j\in N(i)}\text{SLP}_{N,l}([\textbf{n}^{l-1}_i; \textbf{e}^{l-1}_{ji} ;\textbf{n}^{l-1}_j]), \forall i\in N, \label{eq:conv_node} \\
    &\textbf{e}^l_{ij} = \text{SLP}_{E,l}([\textbf{n}^l_i ; \textbf{e}^{l-1}_{ij} ; \textbf{n}^l_j]),\forall (i,j)\in E, \label{eq:conv_edge}
\end{align}
$N(i)=\{j:(j,i) \in E\}$, and $\text{SLP}$ denotes a single-layer feedforward network. 
Let $g_{CN,l}(\mathcal{G}^{0})=[\textbf{n}^l_i]_{i\in N}$ and $g_{CE,l}(\mathcal{G}^{0})=[\textbf{e}^l_{ij}]_{(i,j)\in E}$. 
Then, we define $g_P$ with $L$ layers as $g_P(h_{PI}(\mathcal{G}^P_{1:t-1}))\allowbreak=g_P(\mathcal{G}^{PF}_{t-1})=SLP_{H}(g_{CE,L}(g_{CS}(\mathcal{G}_{t-1}^{PF})))$, where $\text{SLP}_H$ has no activation function, and $g_{CS}$ scales the node and edge features of $\mathcal{G}_{t-1}^{PF}$.

By \cref{eq:conv_node,eq:conv_edge}, $g_P(h_{PI}(\mathcal{G}^P_{1:t-1}))$ can capture complex multi-currency–IR relationships.  
In \cref{eq:conv_node}, each currency aggregates information about the IRs of its neighbors and the FX rates connected to it. 
Using such information from two currencies, the embedding of their FX rate is updated in \cref{eq:conv_edge}. 
Over multiple layers of \cref{eq:conv_node,eq:conv_edge}, $g_P$ extracts complex relationships between FX rates and IRs.

Since $f_S$ uses the prediction $[\widehat{X}_{tij}]_{(i,j)\in\mathcal{U}_t}$, we impose an additional condition on $\mathcal{T}_k^{P,\text{va}}$ such that $\bigcup_{k=0}^{n_\text{sy}} \mathcal{T}_k^{P,\text{va}}= [t_0:t_1)$, where $n_\text{sy}\in[1:n_\text{fit})$ is a hyperparameter.  
Then, for $t\in [t_0,t_1)$, we let $k^*(t)=\min \{k'\in [1:n_\text{fit}):t\in\mathcal{T}_{k'}^{P,\text{va}} \}$ and define $[\widehat{X}_{tij}]_{(i,j)\in\mathcal{U}_t}=f_P(\mathcal{G}_{1:t-1}^P;\theta^P_{k^*(t)})$. 
This ensures that for all $k \in [n_\text{sy}:n_\text{fit}]$, $f_S$ can be trained only on predictions made outside the training period of $f_P$.  
Without this condition, $f_S$ could be trained on $f_P$'s own training predictions for $t \in [t_0:t_1)$, i.e., $[t_0:t_1)\setminus\bigcup_{k=1}^{n_{\text{fit}}} \mathcal{T}_k^{P,\text{va}}\neq\emptyset$. 
We evaluate $f_P$ on $\mathcal{T}_k^\text{te}$ for all $k \in [1:n_\text{fit}]$, and $f_S$ for $k \in [n_\text{sy}:n_\text{fit}]$.

\subsection{Foreign Exchange Statistical Arbitrage}
\label{subsec:statarb}
\textbf{Problem Formulation.} We formulate FXSA as the following stochastic optimization problem: for each $t\in\mathbb{N}$, 
\begin{align}
    & \max_{w_{tij}: (i,j)\in \mathcal{U}'_t} \quad \frac{\mathbb{E}G_t}{\sqrt{Var(G_t)}} \quad \text{s.t.} \label{eq:max_util_obj} \\
    & ~G_t = \sum_{i\in C} \frac{1+Y_{t,1,i}^\text{f}}{1+Y_{t,1,o}^\text{f}}X_{t+1,i,o} H_{ti}, \label{eq:max_util_gain}\\
    & ~H_{ti} = \sum_{j:(j,i)\in \mathcal{U}'_t} X_{tji}\widehat{X}_{toj}w_{tji} - \sum_{j:(i,j)\in \mathcal{U}'_t} \widehat{X}_{toi} w_{tij},\forall i \in C, \label{eq:max_util_holding}\\
    & ~\mathbb{E}H_{ti}=0,\forall i \in C\setminus\{o\}, \label{eq:max_util_flow_conservation}\\ 
    & ~\sum_{(i,j)\in \mathcal{U}'_t} w_{tij} = 1, \label{eq:max_util_weight_sum}\\
    & ~w_{tij} w_{tji} =0, \forall (i,j)\in \mathcal{U}'_t, \label{eq:max_util_direct_arb} \\
    & ~w_{tij} \geq 0, \forall (i,j)\in \mathcal{U}'_t. \label{eq:max_util_nonneg}
\end{align}
Compared to \cite{soon2007currency,tatsumura2020currency,zhang2025efficient}, our problem includes the random variables $X_{tji}, X_{t+1,i,o}$ in \cref{eq:max_util_gain,eq:max_util_holding} to account for the time lag between observation at $t-1$ and execution at $t$. 

Deterministic decision variable $w_{tij}$ is the amount (in units of $o$) the agent converts from currency $i$ to $j$.  
We define $H_{ti} \in \mathbb{R}$, for $i \in C \setminus \{o\}$, as the agent’s holding in currency $i$ from $t$ to $t+1$; $H_{to} \in \mathbb{R}$ as the instantaneous profit or loss in $o$ at $t$; and $G_t \in \mathbb{R}$ as the time-$t$ present-value gain from trading, in units of $o$.
Since \cref{eq:max_util_gain,eq:max_util_holding} involve $\widehat{X}_{toi}$ and $X_{t+1,i,o}$, we restrict the link set to $\mathcal{U}'_t\subset\{(i,j)\in\mathcal{U}_t:\{(o,i),(o,j)\}\subset \mathcal{U}_t\cup\{(o,o)\}, \{(i,o),(j,o)\}\subset E_t\cup\{(o,o)\}\}$.  
As \cref{eq:direct_arb} is assumed, we remove the direct arbitrage trades---where both $w_{tij}$ and $w_{tji}$ are positive---by imposing \cref{eq:max_util_direct_arb}.

In \cref{eq:max_util_obj}, our problem maximizes the information ratio as defined in \cite{sharpe1998sharpe}, assuming a zero benchmark expected return.
Since FX rates can change between $t$ and $t+1$, i.e., $X_{tij} \neq X_{t+1,ij}$, we account for the associated risk by maximizing the risk-adjusted return measured by the information ratio. 
Because we pursue a strategy that holds no position between $t$ and $t+1$, i.e., $H_{ti} = 0, \forall i \in C \setminus \{o\}$, as required for deterministic arbitrage, we assume $\mathbb{E}[\sum_{i \in C \setminus \{o\}} H_{ti} X_{tio} Y^\text{f}_{t,1,o}] = 0$. 
That is, we assume that the expected return of the benchmark that invests $H_{ti}$ in the risk-free asset between $t$ and $t+1$ is zero. 
Since $H_{to}$ is an instantaneous profit or loss at $t$, it can be invested through other strategies between $t$ and $t+1$.

The agent trades at $t$, holds the resulting currencies $\{H_{ti}:i\in C \setminus \{o\}\}$ until $t+1$, converts all holdings back to $o$ at $t+1$, and then to their time-$t$ present values, as represented by \cref{eq:max_util_gain}. 
We assume that, for each $t \in \mathbb{N}$, the agent has unlimited borrowing and lending capacity at the rate $Y_{t,1,i}^\text{f}$ in any currency $i \in C\setminus \{o\}$. 
Thus, during the holding period $(t, t+1)$, interest accrues, and each holding position becomes $(1 + Y_{t,1,i}^\text{f}) H_{ti}$, in units of $i$, at $t+1$. 
After converting all holdings to the home currency $o$ and then to their time-$t$ values, the gain $G_t$ from the trading strategy is given by \cref{eq:max_util_gain}. 
For notational convenience, we define $X_{t,o,o}=\widetilde{X}_{t,o,o}=\widehat{X}_{t,o,o}=1$ for all $t\in\mathbb{N}$. 
Then, \cref{eq:max_util_gain,eq:max_util_holding} can also be applied to $H_{to}$.

In \cref{eq:max_util_holding,eq:max_util_flow_conservation}, for $i\in C\setminus \{o\}$, the holding position $H_{ti}$ is the net cash flow, and its expectation is constrained to be zero.  
In \cref{eq:max_util_holding}, the first and second terms denote total cash inflows and outflows, respectively. 
Since $w_{tij}$ is in units of $o$, it is converted to $i$ as $\widehat{X}_{toi} w_{tij}$, forming the second term. 
For the first term, the agent exchanges $\widehat{X}_{toj} w_{tji}$ in currency $j$ for $i$, resulting in $X_{tji}\widehat{X}_{toj} w_{tji}$. 
When converting from $j$ to $i$, the amount in $j$ should be decided by the agent before the exchange, while the resulting amount in $i$ is determined by the market at execution. 
Thus, $\widehat{X}_{toj}$ is an agent’s estimate, while $X_{tji}$ is a random variable. 
Due to the stochasticity of $X_{tji}$, we use \cref{eq:max_util_flow_conservation} as a surrogate constraint to reduce the risk of FX rate changes between $t$ and $t+1$, while we ideally pursue a strategy satisfying $H_{ti}=0,\forall i\in C\setminus\{o\}$.

\textbf{Graph Learning.}  
To solve \cref{eq:max_util_obj}--\cref{eq:max_util_nonneg}, we train $f_S$ to learn an optimal strategy from realized samples via an empirical formulation.  
We define a GL model $f_S(\mathcal{G}_{1:t-1}^P,[\widehat{X}_{tij}]_{(i,j)\in \mathcal{U}'_t};\theta_t^S)=h_{SO}(g_S(h_{SI}(\mathcal{G}_{1:t-1}^P,\allowbreak [\widehat{X}_{tij}]_{(i,j)\in \mathcal{U}'_t});\allowbreak \theta_t^S))=[w_{tij}]_{(i,j)\in\mathcal{U}'_t}$ 
and train it by minimizing the empirical objective corresponding to \cref{eq:max_util_obj}--\cref{eq:max_util_holding}. 
Here, $h_{SI}$ and $g_S$ denote a feature-mapping function and GNN, respectively. 
The post-processing function $h_{SO}$ ensures that the output satisfies the empirical constraints corresponding to 
\cref{eq:max_util_flow_conservation}--\cref{eq:max_util_nonneg}. 
To satisfy \cref{eq:direct_arb}, we redefine $\widehat{X}_{tij}$ as the geometric average of the predictions from $f_P$: $\widehat{X}_{tij}$ and $1/\widehat{X}_{tji}$.

To achieve \cref{eq:max_util_obj}--\cref{eq:max_util_holding}, we train multiple GL models and select the best models. 
For each $k\in[1:n_\text{fit}]$, we train $f_S$ with loss  
\begin{align}
    &\min_{\theta_{k}^S} \quad \mathbb{E}_\mathcal{B} [-\frac{\hat{\mu}^2_{G}}{\hat{\sigma}^2_G}\cdot\mathbb{I}_{\{\hat{\mu}_{G}>0\}}- \hat{\mu}_{G}\cdot\mathbb{I}_{\{\hat{\mu}_{G}\leq0\}}]  \quad \text{s.t.} \label{eq:glarb:obj}\\
    &~\hat{\mu}_{G}=\frac{1}{|\mathcal{B}|}\sum_{t\in\mathcal{B}} \widetilde{G}_t,\quad \hat{\sigma}^2_G=\frac{1}{|\mathcal{B}|-1}\sum_{t\in\mathcal{B}}(\widetilde{G}_t-\hat{\mu}_{G} )^2, \label{eq:glarb:muhat}\\ 
    & ~\widetilde{G}_t = \sum_{i\in C} \frac{1+\widetilde{Y}_{t,1,i}}{1+\widetilde{Y}_{t,1,o}}\widetilde{X}_{t+1,i,o} \widetilde{H}_{ti}, \label{eq:glarb:gain}\\ 
    & ~\widetilde{H}_{ti} = \sum_{j:(j,i)\in \mathcal{U}'_t} \widetilde{X}_{tji}\widehat{X}_{toj}w_{tji} - \sum_{j:(i,j)\in \mathcal{U}'_t} \widehat{X}_{toi} w_{tij},\forall i \in C, \label{eq:glarb:holding}
\end{align}
where $[w_{tij}]_{(i,j)\in\mathcal{U}'_t}=f_S(\mathcal{G}_{1:t-1}^P,[\widehat{X}_{tij}]_{(i,j)\in \mathcal{U}_t};\allowbreak \theta_{kl}^S)$, and $\mathcal{B}\subset\mathcal{T}_k^{S,\text{tr}}$ is a batch. 
For each $k\in[1:n_\text{fit}]$, the best hyperparameter set is selected from among $n_\text{hyper}$ configurations based on performance on $\mathcal{T}_k^{S,\text{va}}$.  
Loss \cref{eq:glarb:obj} maximizes a squared estimate of \cref{eq:max_util_obj} when $\hat{\mu}_G > 0$, where $\mathbb{I}$ denotes the indicator function. 
Since maximizing ${\hat{\mu}^2_{G}}/{\hat{\sigma}^2_G}$ when $\hat{\mu}_{G}\leq0$ can increase the loss, we instead maximize the estimated profit. 
For $f_S$ to learn from realized samples, during training and evaluation, we use $\widetilde{X}_{t+1,i,o},\widetilde{X}_{tij},\widetilde{Y}_{t,1,i},\widetilde{Y}_{t,1,o}$ in lieu of $X_{t+1,i,o},X_{tij},Y_{t,1,i}^\text{f},Y_{t,1,o}^\text{f}$ in \cref{eq:max_util_gain,eq:max_util_holding}, yielding \cref{eq:glarb:gain,eq:glarb:holding}, respectively. 
Note that $\widetilde{Y}_{t,1,i}, \widetilde{Y}_{t,1,o}, \widetilde{X}_{t+1,i,o}$, and $\widetilde{X}_{tji}$ are not used by $f_S$ during inference---unlike in prior studies \cite{soon2007currency,tatsumura2020currency,zhang2025efficient}.

To handle \cref{eq:max_util_flow_conservation}--\cref{eq:max_util_nonneg}, 
we transform \cref{eq:max_util_holding,eq:max_util_flow_conservation} into empirical constraints and define $h_{SO}$. 
Replacing $\mathbb{E}X_{tij}$ with the predictions $\widehat{X}_{tij}$ from $f_P$,  
we convert the expectation of \cref{eq:max_util_holding} and \cref{eq:max_util_flow_conservation} into 
\begin{align}
    &\widehat{H}_{ti} = \sum_{j:(j,i)\in \mathcal{U}'_t} \widehat{X}_{tji}\widehat{X}_{toj}w_{tji} - \sum_{j:(i,j)\in \mathcal{U}'_t} \widehat{X}_{toi} w_{tij},\forall i \in C \label{eq:Hhat}, \\
    &\widehat{H}_{ti} =0,\forall i \in C \setminus\{o\}, \label{eq:Hhat_flow_conserv} 
\end{align}
respectively. 
While \cref{eq:max_util_weight_sum,eq:max_util_nonneg,eq:Hhat_flow_conserv} can be enforced by projecting the outputs of $g_S$ onto their linear constraint space, \cref{eq:max_util_direct_arb} is nonlinear.
Thus, we introduce $u_{tij}$ subject to 
\begin{align}
    &\sum_{j:(i,j)\in \mathcal{U}'_t} u_{tij} =0, \forall i \in C\setminus \{o\}, \label{eq:u:flow_conserv}\\ 
    &\widehat{X}_{toi} u_{tij} +\widehat{X}_{toj} \widehat{X}_{tji}u_{tji}  =0, \forall (i,j)\in \mathcal{U}'_t: i<j, \label{eq:u:sym} \\
    &u_{tij}\in\mathbb{R}, \forall (i,j)\in \mathcal{U}'_t, \label{eq:u:real} \\
    &w_{tij}=\frac{[u_{tij}]^+}{\sum_{(i,j)\in \mathcal{U}'_t}[u_{tij}]^+},\forall (i,j) \in \mathcal{U}'_t, \label{eq:u:w_u}
\end{align}
if $\sum_{(i,j)\in\mathcal{U}'_t}[u_{tij}]^+>0$. 
Then, due to \cref{eq:u:sym,eq:u:w_u}, $w_{tij}$ satisfies \cref{eq:max_util_direct_arb}.
In addition, \cref{eq:u:w_u} enforces \cref{eq:max_util_weight_sum,eq:max_util_nonneg}, and \cref{eq:u:flow_conserv} enforces \cref{eq:Hhat_flow_conserv}. 
The details are provided in the proof of \Cref{prop:one}.
We let $\widehat{D}_t=\{[u_{tij}]_{(i,j)\in\mathcal{U}'_t}: \text{\cref{eq:u:flow_conserv}--\cref{eq:u:real}}\} \subset \mathbb{R}^{|\mathcal{U}'_t|}$ and let $\widehat{B}_t$ denote a matrix whose columns form a basis for $\widehat{D}_t$.   
We then define $h_{SO}([u'_{tij}]_{(i,j)\in\mathcal{U}'_t})=[w_{tij}]_{(i,j)\in\mathcal{U}'_t}$ as in \cref{eq:u:w_u}, where
\begin{align}
    &[u_{tij}]_{(i,j)\in\mathcal{U}'_t}=\mathrm{Proj_{\widehat{D}_t}} [u'_{tij}]_{(i,j)\in\mathcal{U}'_t}=\widehat{P}_t [u'_{tij}]_{(i,j)\in\mathcal{U}'_t}, \label{eq:u:proj}
\end{align}
if $\sum_{(i,j)\in\mathcal{U}'_t}[u_{tij}]^+>0$. 
Here, $\widehat{P}_t=\widehat{B}_t(\widehat{B}_t^T \widehat{B}_t)^{-1}\widehat{B}_t^T$ is the projection matrix onto $\widehat{D}_t$. 
Since the projection in \cref{eq:u:proj} implies $[u_{tij}]_{(i,j)\in\mathcal{U}'_t}$ satisfies \cref{eq:u:flow_conserv}--\cref{eq:u:real}, the proposition below shows $h_{SO}$ guarantees the empirical constraints \cref{eq:Hhat_flow_conserv} and \cref{eq:max_util_weight_sum}--\cref{eq:max_util_nonneg}, which correspond to  \cref{eq:max_util_flow_conservation}--\cref{eq:max_util_nonneg}. 
Unlike \cite{zhang2025efficient}, our method does not allow constraint violations.

\begin{proposition}
\label{prop:one}
Equality~$C_1 = C_2$ holds if $\widehat{X}_{tij}\widehat{X}_{tji}=1$ for all $(i,j)\in\mathcal{U}'_t$, where
\begin{align*}
C_1 =\{[w_{tij}]_{(i,j)\in \mathcal{U}'_t}: & \text{\cref{eq:max_util_weight_sum}--\cref{eq:max_util_nonneg} and \cref{eq:Hhat_flow_conserv}}\},\\
C_2 =\{[w_{tij}]_{(i,j)\in \mathcal{U}'_t}: &\text{\cref{eq:u:flow_conserv}--\cref{eq:u:w_u}}, \textstyle \sum_{(i,j)\in\mathcal{U}'_t}[u_{tij}]^+>0 \}.   
\end{align*}
\end{proposition}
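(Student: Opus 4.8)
The plan is to prove the set equality via the two inclusions $C_2\subseteq C_1$ and $C_1\subseteq C_2$, using throughout that all estimates $\widehat X_{t\cdot\cdot}$ are strictly positive and that $\mathcal U'_t$ is symmetric (both orientations of each pair occur, as \cref{eq:u:sym} presupposes). The pivotal preliminary observation is that the hypothesis $\widehat X_{tij}\widehat X_{tji}=1$ symmetrizes the constraint \cref{eq:u:sym}: multiplying the $i<j$ relation $\widehat X_{toi}u_{tij}+\widehat X_{toj}\widehat X_{tji}u_{tji}=0$ by $\widehat X_{tij}$ and using reciprocity reproduces the swapped relation, so \cref{eq:u:sym} may be invoked for every ordered pair $(i,j)$, not only for $i<j$. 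This is what lets both inclusions treat all neighbors of a node uniformly.

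For $C_2\subseteq C_1$, I would take $[u_{tij}]$ feasible for \cref{eq:u:flow_conserv}--\cref{eq:u:real} with $S:=\sum_{(i,j)}[u_{tij}]^+>0$ and set $w_{tij}=[u_{tij}]^+/S$. Then \cref{eq:max_util_nonneg} and \cref{eq:max_util_weight_sum} are immediate, and \cref{eq:u:sym} together with positivity forces $u_{tij}$ and $u_{tji}$ to have opposite signs (or both vanish), so at most one of $[u_{tij}]^+,[u_{tji}]^+$ is nonzero and \cref{eq:max_util_direct_arb} holds. For \cref{eq:Hhat_flow_conserv} I would compute $S\,\widehat H_{ti}$ from \cref{eq:Hhat}, pair the inflow term $\widehat X_{tji}\widehat X_{toj}[u_{tji}]^+$ with the outflow term $\widehat X_{toi}[u_{tij}]^+$ for each neighbor $j$, and show by a short two-case sign argument (using the symmetrized \cref{eq:u:sym}) that each paired difference equals $-\widehat X_{toi}u_{tij}$; summing over $j$ gives $S\,\widehat H_{ti}=-\widehat X_{toi}\sum_j u_{tij}=0$ by \cref{eq:u:flow_conserv}.

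For $C_1\subseteq C_2$ I would produce an explicit witness. Given $w$ feasible for $C_1$, set $u_{tij}=w_{tij}-(\widehat X_{toj}\widehat X_{tji}/\widehat X_{toi})\,w_{tji}$; this trivially satisfies \cref{eq:u:real}, and substituting into \cref{eq:u:sym} yields $\widehat X_{toi}w_{tij}(1-\widehat X_{tij}\widehat X_{tji})$, which is zero by hypothesis. The direct-arbitrage constraint $w_{tij}w_{tji}=0$ leaves exactly one summand of $u_{tij}$ active per edge, so $[u_{tij}]^+=w_{tij}$ for all $(i,j)$; hence $\sum_{(i,j)}[u_{tij}]^+=\sum_{(i,j)}w_{tij}=1>0$, the side condition is met, and \cref{eq:u:w_u} returns the original $w$. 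Lastly, summing the definition of $u_{tij}$ over $j$ and invoking the $C_1$ constraint $\widehat H_{ti}=0$ from \cref{eq:Hhat_flow_conserv} (with $\widehat H_{ti}$ as in \cref{eq:Hhat}) collapses $\sum_j(\widehat X_{toj}\widehat X_{tji}/\widehat X_{toi})w_{tji}$ to $\sum_j w_{tij}$, giving $\sum_j u_{tij}=0$ and thus \cref{eq:u:flow_conserv}.

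The main obstacle I expect is the interplay between the nonlinear ingredients---the complementarity \cref{eq:max_util_direct_arb} and the $[\cdot]^+$-normalization in \cref{eq:u:w_u}---and the inherent asymmetry of \cref{eq:u:sym} in the two orderings of a pair. Isolating where reciprocity is genuinely needed is delicate: without $\widehat X_{tij}\widehat X_{tji}=1$ the two orientations of \cref{eq:u:sym} prescribe different reciprocal values, $[u_{tij}]^+=w_{tij}$ fails to hold in both directions simultaneously, and $\widehat H_{ti}$ and $\sum_j u_{tij}$ cease to be proportional, so the equality $C_1=C_2$ collapses. I would therefore foreground the symmetrization step and keep careful track of signs in the $[\cdot]^+$ case analysis, which is where the bookkeeping is heaviest.
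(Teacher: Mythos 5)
Your proof is correct and takes essentially the same route as the paper's: the same two inclusions, the same witness (your $u_{tij}=w_{tij}-(\widehat{X}_{toj}\widehat{X}_{tji}/\widehat{X}_{toi})\,w_{tji}$ coincides with the paper's indicator-based definition \cref{eq:prop1:def_u} under the complementarity constraint \cref{eq:max_util_direct_arb}), and the same computation reducing $\widehat{H}_{ti}\sum_{(i,j)\in\mathcal{U}'_t}[u_{tij}]^+$ to $-\widehat{X}_{toi}\sum_{j}u_{tij}$ before invoking \cref{eq:u:flow_conserv}. Your explicit symmetrization of \cref{eq:u:sym} under reciprocity and the clean cancellation to $\widehat{X}_{toi}w_{tij}(1-\widehat{X}_{tij}\widehat{X}_{tji})$ are minor presentational improvements over the paper's indicator bookkeeping, not a different argument.
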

\begin{proof} 
First, we show $C_1\subseteq C_2$. 
Let $[w_{tij}]_{(i,j)\in\mathcal{U}'_t}\in C_1$.  
We define 
\begin{equation}
    u_{tij}=w_{tij}\mathbb{I}_{\{w_{tij}>0\}}-\frac{\widehat{X}_{toj} \widehat{X}_{tji}}{\widehat{X}_{toi}} w_{tji} \mathbb{I}_{\{w_{tij} \leq 0\}}. \label{eq:prop1:def_u}
\end{equation}
Then, \cref{eq:u:real} holds. 
By \cref{eq:max_util_weight_sum,eq:max_util_direct_arb}, there exists $(i,j)$ such that $w_{tij}>0$ and $w_{tji}=0$, implying $u_{tij}>0$. 
Thus, $\sum_{(i,j)\in\mathcal{U}'}[u_{tij}]^+ \allowbreak > 0$. 
From \cref{eq:prop1:def_u} and \cref{eq:max_util_nonneg}, $[u_{tij}]^+=w_{tij}$, which yields \cref{eq:u:w_u} by \cref{eq:max_util_weight_sum}.

By \cref{eq:Hhat}, $\widehat{H}_{ti}=\sum_{j:(i,j)\in \mathcal{U}'_t} (\widehat{X}_{tji}\widehat{X}_{toj}w_{tji} - \widehat{X}_{toi} w_{tij})$ since $(i,j)\in\mathcal{U}'_t$ if and only if $(j,i)\in\mathcal{U}'_t$.  
Since $w_{tij}\geq0$, we have $\widehat{H}_{ti}=\sum_{j:(i,j)\in \mathcal{U}'_t} (\widehat{X}_{tji}\allowbreak\widehat{X}_{toj}w_{tji}\mathbb{I}_{\{w_{tji}>0\}}  - \widehat{X}_{toi} w_{tij}\mathbb{I}_{\{w_{tij}>0\}})$.  
Since $w_{tji}\cdot\allowbreak\mathbb{I}_{\{w_{tji}>0\}}\allowbreak=w_{tji}\mathbb{I}_{\{w_{tij} \leq 0\}}$, 
by \cref{eq:max_util_direct_arb} and \cref{eq:max_util_nonneg}, we obtain $\widehat{H}_{ti}=\sum_{j:(i,j)\in \mathcal{U}'_t} (\allowbreak\widehat{X}_{tji}\widehat{X}_{toj}w_{tji}\cdot\allowbreak\mathbb{I}_{\{w_{tij} \leq 0\}}  - \widehat{X}_{toi} w_{tij}\mathbb{I}_{\{w_{tij}>0\}})=\allowbreak -\widehat{X}_{toi} \cdot \allowbreak \sum_{j:(i,j)\in\mathcal{U}'_t}u_{tij}$, which, together with \cref{eq:Hhat_flow_conserv}, yields \cref{eq:u:flow_conserv}.

By \cref{eq:prop1:def_u}, since $\widehat{X}_{tij}\widehat{X}_{tji}=1$, we obtain $\widehat{X}_{toi} u_{tij} +\widehat{X}_{toj} \widehat{X}_{tji}u_{tji}=\widehat{X}_{toi}w_{tij}(\mathbb{I}_{\{w_{tij}>0\}}-\mathbb{I}_{\{w_{tji} \leq 0\}})+\widehat{X}_{toj} \widehat{X}_{tji} w_{tji}(\mathbb{I}_{\{w_{tji}>0\}} -  \mathbb{I}_{\{w_{tij} \leq 0\}} )$.  
By \cref{eq:max_util_direct_arb} and \cref{eq:max_util_nonneg}, $w_{tij}(\allowbreak\mathbb{I}_{\{w_{tij}>0\}}\allowbreak-\mathbb{I}_{\{w_{tji} \leq 0\}})=w_{tji}(\mathbb{I}_{\{w_{tji}>0\}} -  \mathbb{I}_{\{w_{tij} \leq 0\}})=0$. 
Thus, we have \cref{eq:u:sym}.  Hence, $C_1\subseteq C_2$. 

Second, we show $C_2\subseteq C_1$. 
Let $[w_{tij}]_{(i,j)\in\mathcal{U}'_t}\in C_2$. 
From the definition of $C_2$, there exists $[u_{tij}]_{(i,j)\in \mathcal{U}'_t}$ such that $u_{tij}$ defines $w_{tij}$, \cref{eq:u:flow_conserv,eq:u:sym,eq:u:real,eq:u:w_u} hold, and $\sum_{(i,j)\in\mathcal{U}'_t}[u_{tij}]^+>0$. 
For $i \in C\setminus \{o\}$, 
\begin{align}
    & \widehat{H}_{ti}  \sum_{(i,j)\in \mathcal{U}'_t}[u_{tij}]^+ \\
    & = {\sum_{j:(j,i)\in \mathcal{U}'_t} \widehat{X}_{tji}\widehat{X}_{toj}\left[\frac{-\widehat{X}_{toi}u_{tij}}{\widehat{X}_{tji}\widehat{X}_{toj}}\right]^+} - {\sum_{j:(i,j)\in \mathcal{U}'_t} \widehat{X}_{toi} [u_{tij}]^+} \label{eq:proof_zero_node12}\\
    & =-\widehat{X}_{toi}\sum_{j:(i,j)\in \mathcal{U}'_t} \left( [ u_{tij}]^+-[-u_{tij}]^+\right) \label{eq:proof_zero_node13}\\
    & =-\widehat{X}_{toi}\sum_{j:(i,j)\in \mathcal{U}'_t} u_{tij}. \label{eq:proof_zero_node14}
\end{align}
We obtain \cref{eq:proof_zero_node12} from \cref{eq:Hhat,eq:u:w_u,eq:u:sym}.  
The definition of $[\cdot]^+$ yields \cref{eq:proof_zero_node13,eq:proof_zero_node14}. 
We have $\widehat{H}_{ti}  \sum_{(i,j)\in \mathcal{U}'_t}[u_{tij}]^+=-\widehat{X}_{toi}\sum_{j:(i,j)\in \mathcal{U}'_t} u_{tij}\allowbreak=0$, by \cref{eq:u:flow_conserv}. 
Since $\sum_{(i,j)\in\mathcal{U}'_t}[u_{tij}]^+>0$, \cref{eq:Hhat_flow_conserv} holds. 

By \cref{eq:u:w_u}, \cref{eq:max_util_weight_sum} holds. 
By combining \cref{eq:u:sym}, i.e., $u_{tij}u_{tji}\leq0$, and \cref{eq:u:w_u}, we obtain \cref{eq:max_util_direct_arb}. 
From \cref{eq:u:w_u}, \cref{eq:max_util_nonneg} holds. 
Hence, $C_2\subseteq C_1$. 
\end{proof}

We define $f_{SI}(\mathcal{G}_{1:t-1}^P,[\widehat{X}_{t,ij}]_{(i,j)\in \mathcal{U}'_t})=\mathcal{G}^{P''}_{t-1}=(\mathcal{U}'_{t},\widehat{A}_t, [\hat{\boldsymbol{\alpha}}_{tij}\allowbreak]_{(i,j) \in \mathcal{U}'_{t}}, [\hat{\textbf{p}}_{tij}]_{(i,j)\in A_t})$. 
Here, $\hat{\boldsymbol{\alpha}}_{tij}=[\bar{\alpha}_{tijt'}]_{t' \in \mathcal{T}^\text{R}}\in\mathbb{R}^{|\mathcal{T}^\text{R}|}$ 
where $\bar{\alpha}_{tijt'} = \frac{1}{|\mathcal{T}'(i,j,t,t')|} \sum_{t\in\mathcal{T}'(i,j,t,t')}  \hat{\alpha}_{tij}$, 
$\hat{\alpha}_{tij}=\log \widehat{X}_{tij}-\log \widehat{V}_{ti} + \log \widehat{V}_{tj}$, and $[\log\widehat{V}_{ti}]_{i \in C}$ is the least-squares solution to \cref{eq:v:diff,eq:v:avg} when $\widetilde{V}_{ti},\widetilde{V}_{tj},\widetilde{X}_{tij}$ are replaced by $\widehat{V}_{ti},\widehat{V}_{tj},\widehat{X}_{tij}$, respectively.
We define $\mathcal{T'}(i,j,t,t')=\{t''\in[t-t'+1:t]: (i,j) \in L_{t''}\}$ as the set of dates $t''\in[t - t' + 1 : t]$ on which both exchanges $(i,j)$ and $(j,i)$ are directly tradable. 
As discussed in \cref{eq:v:diff,eq:v:avg},  $\hat{\alpha}_{tij}$ provides information about estimated arbitrages, and $\hat{\boldsymbol{\alpha}}_{tij}$ consists of temporal averages of $\hat{\alpha}_{tij}$ across different look-back windows in $\mathcal{T}^\text{R}$. 
We define $\widehat{A}_t=\{(i,j): |\hat{p}_{tij}|>\epsilon_S\}$, where $\hat{p}_{tij}$ is the $(i,j)$-th entry of $\widehat{P}_t$, and $\epsilon_S>0$ is a small number. 
We define $\hat{\textbf{p}}_{tij}=[\bar{p}_{tijt'}]_{t'\in\mathcal{T}^\text{R}}\in\mathbb{R}^{|\mathcal{T}^\text{R}|}$ where $\bar{p}_{tijt'} = \frac{1}{|\mathcal{T}'(i,j,t,t')|} \sum_{t\in\mathcal{T}'(i,j,t,t')} \hat{p}_{tij}$.
The edge features provide information about how much the target node's trading quantity changes as the source node's quantity changes. 
We define $g_S(\mathcal{G}^{P''}_{t-1})$ with $L$ layers as $\text{SLP}_{H}(g_{CN,L}(g_{CS}(\allowbreak\mathcal{G}^{P''}_{t-1})))$.

\section{Experiments}

We use Finaeon data \cite{finaeon_gfdatabase_2025} from Jan. 1, 1995 to Dec. 31, 2024, covering ten of the most traded currencies in 2022 \cite{BIS2022fx}: \texttt{USD}, \texttt{EUR}, \texttt{JPY}, \texttt{GBP}, \texttt{AUD}, \texttt{CAD}, \texttt{CHF}, \texttt{HKD}, \texttt{SGD}, and \texttt{SEK}. \texttt{CNY} is excluded due to incomplete data.
The data include daily closing exchange rates among the ten currencies and their IRs for 1-, 2-, 5-, and 10-year government bonds. 
To satisfy assumption \cref{eq:direct_arb}, FX rates $\widetilde{X}_{tij}$, excluding direct arbitrages, are redefined as the geometric average of $\widetilde{X}_{tij}$ and $1/\widetilde{X}_{tji}$. 
For temporal stability, missing values in $\widetilde{X}_{tij}$ are forward-filled for up to seven days only when computing \cref{eq:v:diff,eq:v:avg}. 
Missing IR values are forward-filled for up to 30 days; any remaining missing values are imputed via regression on $\log(\text{maturity})$.

To ensure data stability, we apply the following preprocessing.  
FX rates involving \texttt{JPY} and \texttt{SEK} that are mis-scaled by factors of 10, 100, or 10,000 are rescaled to their correct magnitudes. 
Erroneous values---those extremely large or small relative to adjacent time-series entries---are removed: \texttt{SGDAUD} has repeated entries from Dec.~29, 2014 to May~20, 2015; \texttt{SEKEUR} falls below 0.069 on Jun.~7, 2024; and \texttt{AUDCHF} exceeds 0.85 on Apr.~24, 2023, Jan.~12, 2024, and Jan.~26, 2024. 
These values are identified by comparison with historical charts from \texttt{Yahoo} \texttt{Finance}~\cite{yahoofinance_check}. 
IR entries above 90\% for \texttt{HKG} and \texttt{SGP} during 2022 and 2023 are removed, following~\cite{hkab_hibor,abs_k2_rates}.

Trading is assumed to occur only on weekdays, with weekends excluded from trading dates. 
To ensure at least a few thousand training graphs for GL models, we define $t_1$ and $t_k$ as the first trading days of 2015 and each subsequent quarter, respectively, and set $n_\text{fit} = 40$ and $n_\text{sy} = 5$. 
We conduct three FXSA experiments by setting each of the most-traded currencies \texttt{USD}, \texttt{EUR}, and \texttt{JPY} as home currency~$o$.   
We estimate $\widehat{Y}_{t,1,i}$ by dividing the 1-year government bond rate of $i$ by 365. 
We set $\mathcal{T}^\text{R}=\{1,3,5,10,15,20\}$. 
We set $n_\text{hyper}=9$ and consider hyperparameter configurations $(\#params,L) \in \{10^4,5\cdot10^4,10^5\}\times\{2,3,4\}$, where $\#\text{params}$ is the number of parameters in $g_P$ and $g_S$, and $L$ is the number of layers.  
Each SLP in $g_P$ and $g_S$ uses the same number of hidden neurons with LeakyReLU activation.  
Our implementation uses \texttt{Python}, \texttt{Gurobi}, \texttt{R}, along with \texttt{PyTorch}, \texttt{PyTorch-Geometric}, \texttt{PuLP}, \texttt{SciPy}, \texttt{NumPy}, \texttt{Pandas}, and \texttt{lawstat} libraries. 
Experiments are run on a workstation with i9-14900KF, 96GB RAM, and RTX 4070 Ti SUPER.

\textbf{Foreign Exchange Statistical Arbitrage.} 
In \Cref{tab:trading:metrics}, our FXSA method (GNN) achieves, on average, 61.89\% higher information ratio and 45.51\% higher Sortino ratio than the benchmark LP \cref{eq:prelim_arb_obj}--\cref{eq:prelim_nonneg}, consistently outperforming across all three home currencies. 
The metrics where our method outperforms on average are in bold.  
Both our method and the LP use the same predictions for all $\widehat{X}_{tij}$, obtained from the GNN with FX/IR reported in \Cref{tab:pred:avg_mse}, which we discuss shortly.
Our method maintains a consistently higher information ratio than the LP throughout most of the test period (\Cref{fig:rolling_ir}), indicating superior risk-adjusted performance.
Additionally, the sharp drops in 2020, 2022, and 2024 (\Cref{fig:rolling_ir}) likely account for geopolitical shocks unseen in the training data: the COVID-19 pandemic, the Ukraine–Russia war, and the war in the Middle East, respectively. 
Although the average annual return is slightly lower, GNN shows a steadier upward trend than LP (\Cref{fig:cum_pnl}).

The superior performance of our FXSA method stems from better risk management, in our opinion. 
By aiming to maximize the information ratio in \cref{eq:glarb:obj}, the method considers both return and risk. 
As a result, our approach achieves, on average, 52.23\% lower annual volatility and 44.77\% lower maximum drawdown (MDD) (\Cref{tab:trading:metrics}). 
The approach also maintains 30.09\% lower average holdings, $\sum_{i\in C} |\widehat{H}_{ti} \widehat{X}_{tio}|$. 
Our method consistently exhibits lower values than LP throughout most of the test period (\Cref{fig:rolling_holdings}), thereby reducing exposure to the risk of FX rate movements.
Although both methods enforce zero expected holdings (i.e., \cref{eq:prelim_arb_flow_conserv} for LP and \cref{eq:u:flow_conserv}--\cref{eq:u:real} with \Cref{prop:one} for our method), realized values may deviate; our method keeps them closer to zero by minimizing \cref{eq:glarb:obj}.
Lastly, the proposed method shows an 83.12\% lower average Herfindahl-Hirschman index (HHI), indicating more diversified trading and thus reduced risk.
These outperformance results are consistent across the three home currencies.

\begin{table}
\caption{Evaluation Metrics of Trading Strategies (in \%)}
\label{tab:trading:metrics}
\begin{tabular}{c|rrr|rrr|r}
\toprule
Model & \multicolumn{3}{c|}{LP} & \multicolumn{3}{c|}{GNN} & \multicolumn{1}{r}{Change} \\
Currency $o$ & USD & EUR & JPY & USD & EUR & JPY & \multicolumn{1}{c}{Avg.} \\
\midrule
Info. Ratio & 27.21 & 27.10 & 26.87 & 43.86 & 43.54 & 44.01 & \textbf{61.89} \\
Sort. Ratio & 32.70 & 32.46 & 32.53 & 47.44 & 47.67 & 47.03 & \textbf{45.51} \\
Return & 6.19 & 6.18 & 6.13 & 4.80 & 4.74 & 4.75 & -22.73 \\
Volatility & 1.41 & 1.41 & 1.41 & 0.68 & 0.67 & 0.67 & \textbf{-52.23} \\
MDD & 1.61 & 1.60 & 1.72 & 0.92 & 0.89 & 0.91 & \textbf{-44.77} \\
HHI & 31.60 & 31.57 & 31.66 & 5.38 & 5.32 & 5.31 & \textbf{-83.12} \\
Holding & 0.50 & 0.49 & 0.46 & 0.34 & 0.34 & 0.33 & \textbf{-30.09} \\
\bottomrule
\end{tabular}
\end{table}
\begin{figure}[t]
  \centering
  \includegraphics[width=0.99\linewidth, keepaspectratio]{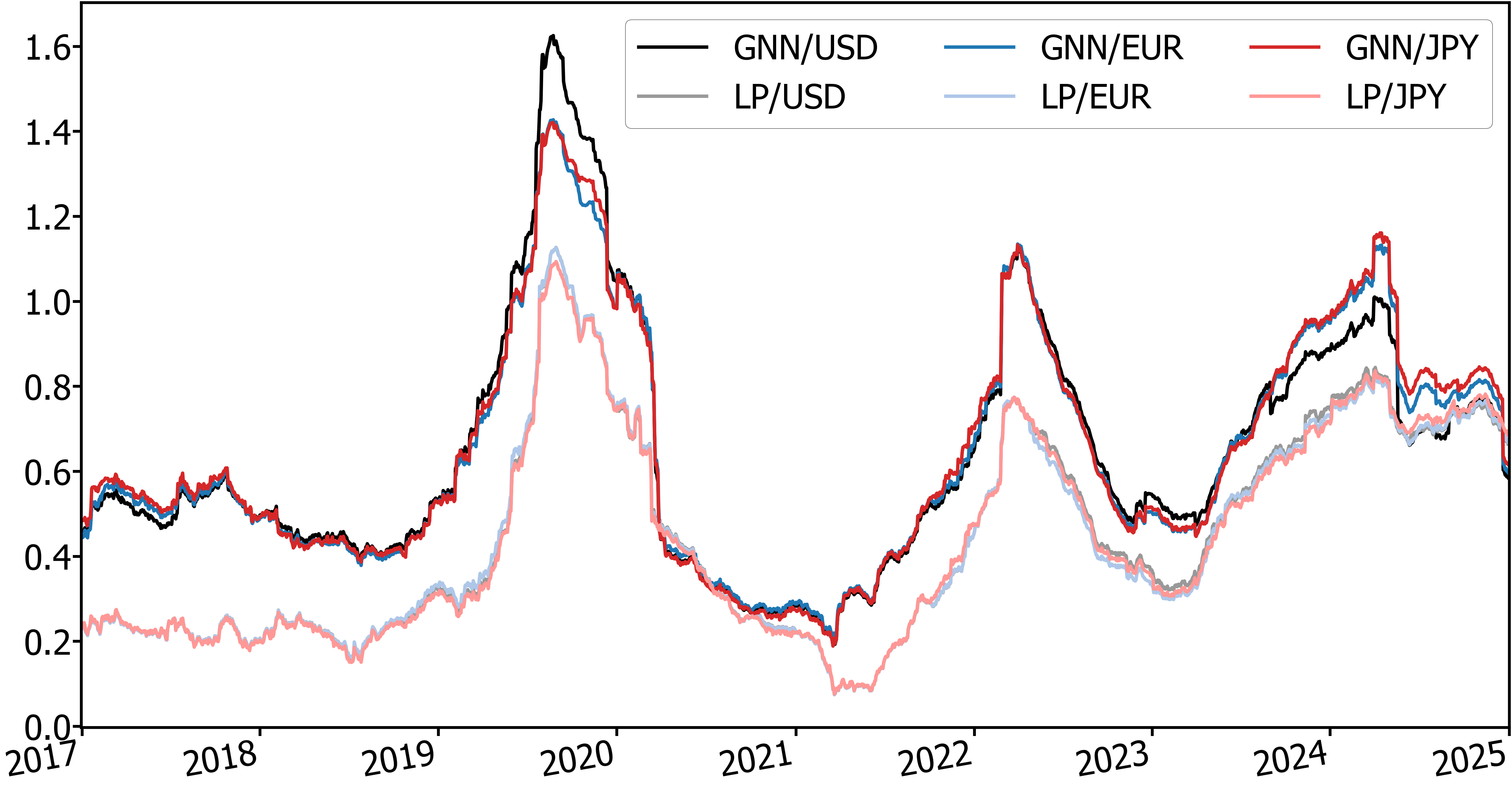}
  \caption{365-Day Rolling Information Ratio}
  \label{fig:rolling_ir}
\end{figure}
\begin{figure}[t]
  \centering
  \includegraphics[width=0.99\linewidth, keepaspectratio]{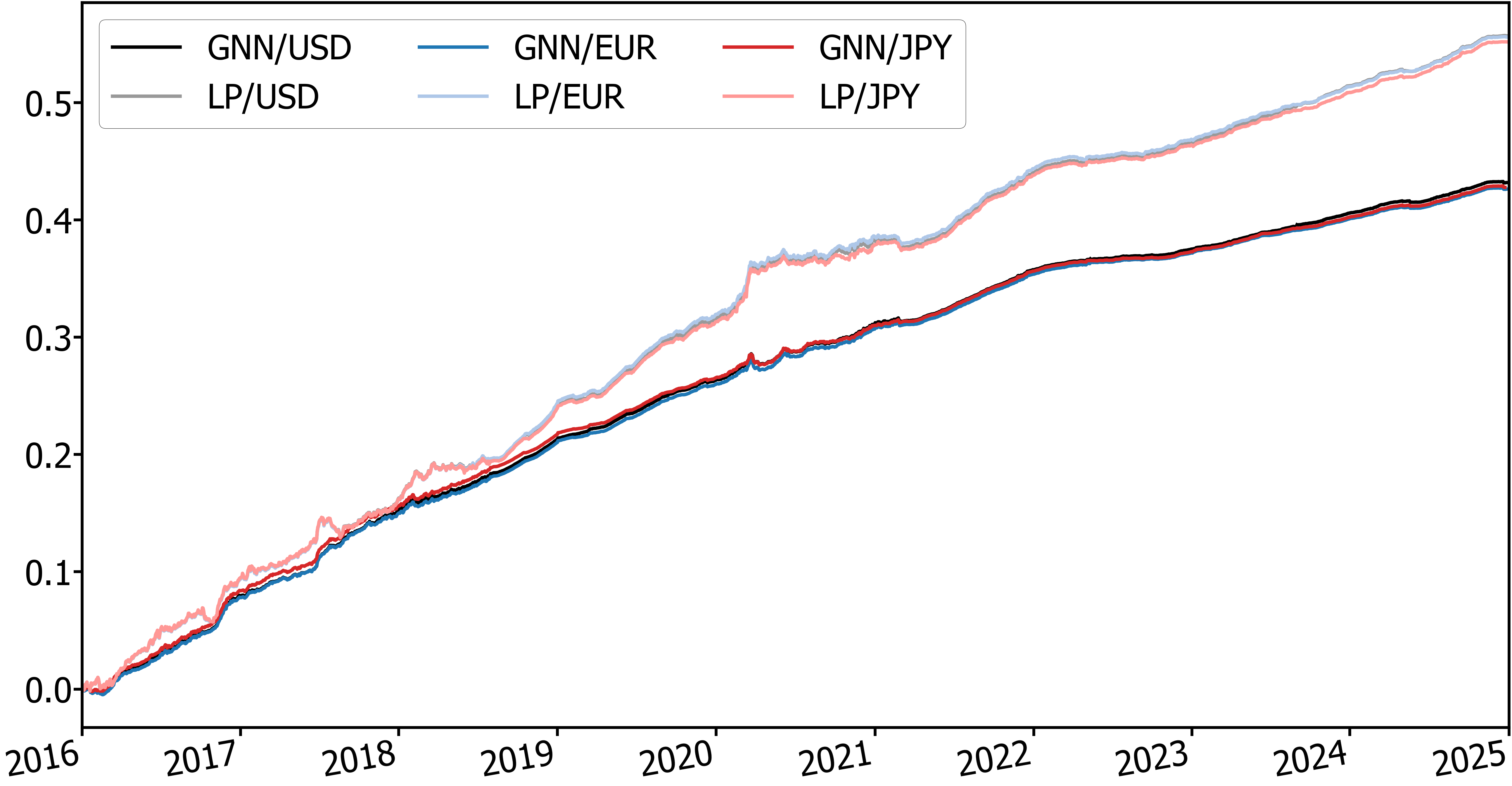}
  \caption{Cumulative Sum of P\&L ($G_t$), in Units of $o$}
  \label{fig:cum_pnl}
\end{figure}
\begin{figure}[t]
  \centering
  \includegraphics[width=0.95\linewidth, keepaspectratio]{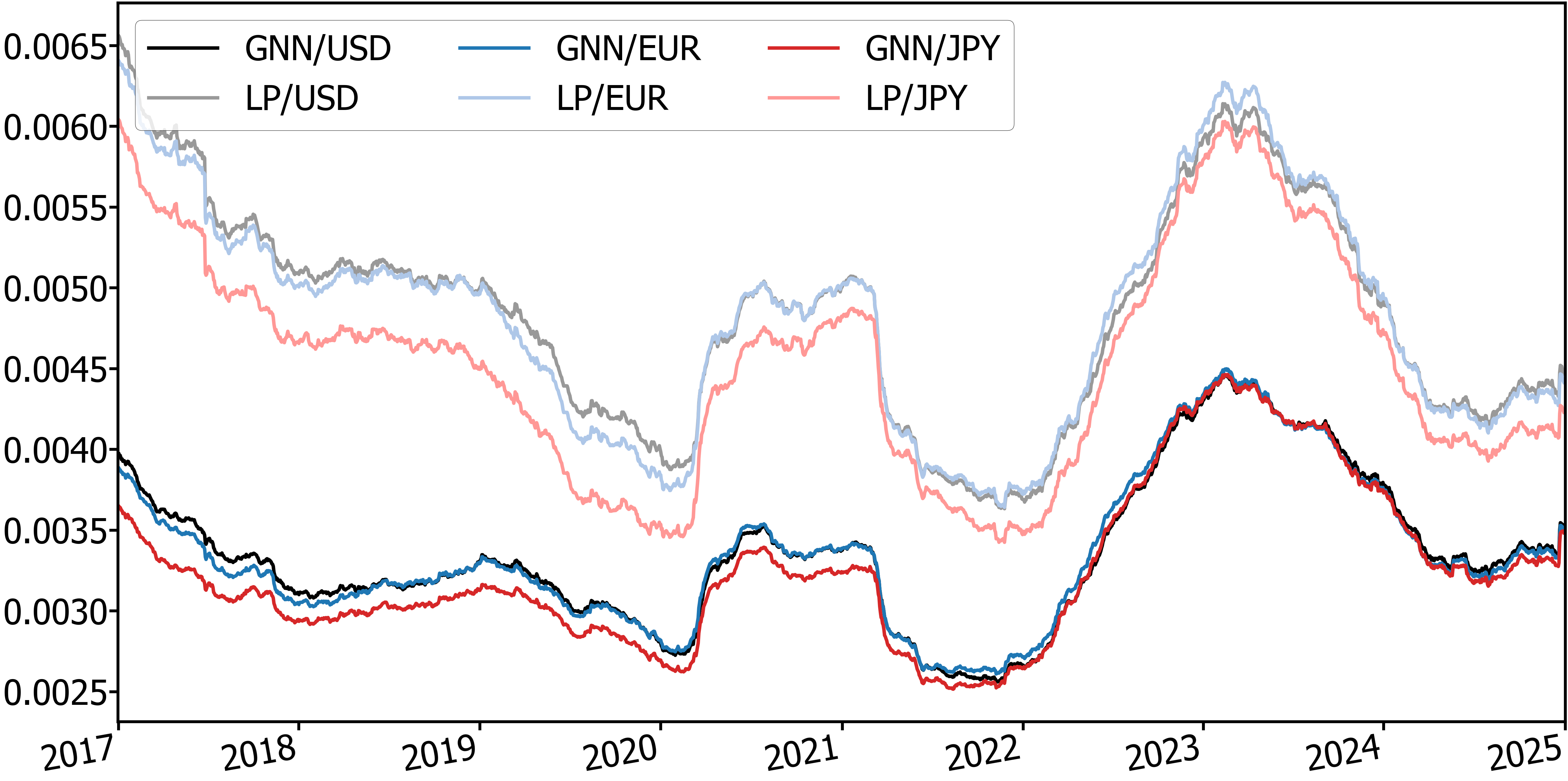}
  \caption{365-Day Rolling Avg. of $\sum_{i\in C}|\widetilde{H}_{ti}\widetilde{X}_{tio}|$, in Units of $o$}
  \label{fig:rolling_holdings}
\end{figure}

\textbf{Foreign Exchange Rate Prediction.} 
\Cref{tab:pred:avg_mse} shows our FXRP method outperforms the benchmark MLP in mean squared error (MSE). 
FX, CV and IR correspond to $\textbf{x}_{tij}$, $\textbf{v}_{ti}$, and $\textbf{y}_{ti}$ features in $\mathcal{G}^{PF}_t$, respectively.  
For each  $\mathcal{T}_k^\text{te}$ ($k\in[1,n_\text{fit}]$), we compute an MSE, and each value in the second and third rows reports the average of the MSEs across all $k$ for the corresponding model trained with the feature combination. 
All GNN results outperform their MLP counterparts, with the worst GNN outperforming the best MLP.

The outperformance of our GNN is statistically validated. 
In \Cref{tab:pred:avg_mse}, each value in the fourth to ninth rows is the p-value of a statistical test comparing the GNN and MLP under the same feature set. 
Wilcoxon, KS, and symmetry refer to the Wilcoxon signed-rank test, Kolmogorov–Smirnov test, and the symmetry test from \cite{miao2006new}, respectively. 
The paired t-, permutation, and Wilcoxon tests evaluate the one-sided hypothesis $H_0:\mathrm{MSE}(MLP) \leq \mathrm{MSE}(GNN)$ versus $H_1:\mathrm{MSE}(MLP) > \mathrm{MSE}(GNN)$, where the paired t-test and permutation test assess mean differences, and the Wilcoxon test assesses median differences.
To validate the assumptions behind these tests, we use the Shapiro–Wilk, KS, and symmetry tests. 
Bold p-values indicate statistically significant results (at the 0.05 level) where test assumptions hold. 
Thus, our GNN significantly outperforms the MLP across all feature sets. 
Moreover, \Cref{tab:pred:p_values} shows the GNNs statistically significantly outperform the MLPs across all feature combinations, except when using FX features alone.

\Cref{tab:pred:avg_mse} demonstrates the necessity of comprehensive relational information among FX rates and IRs captured by GNNs.
Since the CV features are derived from multiple FX differences in \cref{eq:v:diff,eq:v:avg}, they provide some relational information about FX rates, which may explain why the MLP with FX/IR/CV outperforms the MLP with FX/IR. 
However, CV captures limited relational information compared to the GNNs, as GNNs across all feature combinations outperform the MLP with FX/IR/CV.
Moreover, the GNN with FX/IR/CV underperforms the GNN with FX/IR, indicating that CV does not provide additional useful information beyond what the GNN with FX/IR captures.
Lastly, the best performance of the GNN with FX/IR underscores the importance of its information about complex multi-currency-IR relationships.

\begin{table}
\caption{MSE Comparison Between GNN and MLP Across Feature Combinations}
\label{tab:pred:avg_mse}
\begin{tabular}{llcccc}
\toprule
 &  & FX & FX/CV & FX/IR & FX/IR/CV \\
\midrule
\multirow[t]{3}{*}{\makecell[tl]{Avg. MSE\\(Unit: $10^{-5}$)}} & GNN & 2.8680 & 2.8648 & \textbf{2.8635} & 2.8650 \\
 & MLP & 2.9112 & 2.8704 & 2.9083 & 2.8710 \\
\cline{2-6}
 & Avg. & 2.8896 & 2.8676 & 2.8859 & 2.8680 \\
\cline{1-6}
\multirow[t]{6}{*}{P-Value} & Paired T & 0.0000 & 0.0082 & 0.0000 & \textbf{0.0054} \\
 & Wilcoxon & \textbf{0.0000} & \textbf{0.0112} & \textbf{0.0000} & 0.0045 \\
 & Permutation & 0.0000 & 0.0067 & 0.0000 & 0.0052 \\
\cline{2-6}
 & Shapiro & 0.0025 & 0.0004 & 0.0013 & 0.2863 \\
 & KS & 0.2390 & 0.1154 & 0.0709 & 0.7923 \\
 & Symmetry & 0.1440 & 0.0620 & 0.0880 & 0.5600 \\
\bottomrule
\end{tabular}
\end{table}
\begin{table}[t]
\caption{P-Values for MSE Comparison Between GNN and MLP Across Feature Combinations, Testing $H_0:\mathrm{MSE}(MLP) \leq \mathrm{MSE}(GNN)$ vs. $H_1:\mathrm{MSE}(MLP) > \mathrm{MSE}(GNN)$. \textnormal{Final letters T, W, and P indicate the paired t-, Wilcoxon signed-rank, and permutation tests, respectively, applied when their assumptions hold. 
}}
\label{tab:pred:p_values}
\begin{tabular}{rrrrrr}
\toprule
 &  & \multicolumn{4}{c}{GNN} \\
 &  & FX & FX/CV & FX/IR & FX/IR/CV \\
\midrule
\multirow[t]{4}{*}{MLP} & FX & 0.000 W & 0.000 W & 0.000 W & 0.000 W \\
 & FX/CV & 0.147 P & 0.011 W & 0.001 W & 0.003 W \\
 & FX/IR & 0.000 W & 0.000 W & 0.000 W & 0.000 P \\
 & FX/IR/CV & 0.113 W & 0.005 P & 0.000 W & 0.005 T \\
\bottomrule
\end{tabular}
\end{table}

\section{Conclusion}

This paper proposes a two-stage GL approach to generate FXSAs, addressing two key limitations in the literature: 
(i) the absence of GL methods for FXRP that leverage multi-currency and currency–IR relationships, and 
(ii) the disregard of the time lag between price observation and trade execution. 
In the first stage, we formulate FXRP as an edge-level regression problem on a discrete-time spatiotemporal graph to capture complex multi-currency and currency–IR relationships. 
We present a GL method to address the FXRP problem, leveraging the spatiotemporal graph defined with currencies and exchanges as nodes and edges, respectively, where IRs serve as node features, and FX rates serve as edge features. 
In the second stage, we present a stochastic optimization problem to exploit FXSAs while accounting for the observation-execution time lag. 
To solve this problem, we propose a GL method that enforces constraints through projection and ReLU to maximize the information ratio, while utilizing the predictions from the FXRP method for the constraint parameters and node features. 
We prove our GL method satisfies the empirical arbitrage constraints. 
Our GL method for FXSA uses a graph in which nodes represent exchanges and edges encode their influences.
Experimental results show our FXRP method achieves statistically significant improvements over non-graph methods. 
Moreover, our FXSA method achieves a 61.89\% higher information ratio and a 45.51\% higher Sortino ratio than the benchmark.
Our approach presents a novel perspective on FX prediction and StatArbs through the lens of GL.

\bibliographystyle{ACM-Reference-Format}
\bibliography{00.sections/08.references}


\begin{thebibliography}{36}


\ifx \showCODEN    \undefined \def \showCODEN     #1{\unskip}     \fi
\ifx \showISBNx    \undefined \def \showISBNx     #1{\unskip}     \fi
\ifx \showISBNxiii \undefined \def \showISBNxiii  #1{\unskip}     \fi
\ifx \showISSN     \undefined \def \showISSN      #1{\unskip}     \fi
\ifx \showLCCN     \undefined \def \showLCCN      #1{\unskip}     \fi
\ifx \shownote     \undefined \def \shownote      #1{#1}          \fi
\ifx \showarticletitle \undefined \def \showarticletitle #1{#1}   \fi
\ifx \showURL      \undefined \def \showURL       {\relax}        \fi
\providecommand\bibfield[2]{#2}
\providecommand\bibinfo[2]{#2}
\providecommand\natexlab[1]{#1}
\providecommand\showeprint[2][]{arXiv:#2}

\bibitem[Ahmed et~al\mbox{.}(2020)]%
        {ahmed2020flf}
\bibfield{author}{\bibinfo{person}{S. Ahmed}, \bibinfo{person}{S.-U. Hassan}, \bibinfo{person}{N.~R. Aljohani}, {and} \bibinfo{person}{R. Nawaz}.} \bibinfo{year}{2020}\natexlab{}.
\newblock \showarticletitle{FLF-LSTM: A novel prediction system using FOREX loss function}.
\newblock \bibinfo{journal}{\emph{Applied Soft Computing}}  \bibinfo{volume}{97} (\bibinfo{year}{2020}), \bibinfo{pages}{106780}.
\newblock


\bibitem[Aiba et~al\mbox{.}(2002)]%
        {aiba2002triangular}
\bibfield{author}{\bibinfo{person}{Y. Aiba}, \bibinfo{person}{N. Hatano}, \bibinfo{person}{H. Takayasu}, \bibinfo{person}{K. Marumo}, {and} \bibinfo{person}{T. Shimizu}.} \bibinfo{year}{2002}\natexlab{}.
\newblock \showarticletitle{Triangular arbitrage as an interaction among foreign exchange rates}.
\newblock \bibinfo{journal}{\emph{Physica A: Statistical Mechanics and its Applications}} \bibinfo{volume}{310}, \bibinfo{number}{3} (\bibinfo{year}{2002}), \bibinfo{pages}{467--479}.
\newblock


\bibitem[Ajumi and Kaushik(2017)]%
        {ajumi2017exchange}
\bibfield{author}{\bibinfo{person}{O. Ajumi} {and} \bibinfo{person}{A. Kaushik}.} \bibinfo{year}{2017}\natexlab{}.
\newblock \showarticletitle{Exchange rates prediction via deep learning and machine learning: A literature survey on currency forecasting}.
\newblock \bibinfo{journal}{\emph{International Journal of Science and Research}}  \bibinfo{volume}{7} (\bibinfo{year}{2017}).
\newblock


\bibitem[Amat et~al\mbox{.}(2018)]%
        {amat2018fundamentals}
\bibfield{author}{\bibinfo{person}{C. Amat}, \bibinfo{person}{T. Michalski}, {and} \bibinfo{person}{G. Stoltz}.} \bibinfo{year}{2018}\natexlab{}.
\newblock \showarticletitle{Fundamentals and exchange rate forecastability with simple machine learning methods}.
\newblock \bibinfo{journal}{\emph{Journal of International Money and Finance}}  \bibinfo{volume}{88} (\bibinfo{year}{2018}), \bibinfo{pages}{1--24}.
\newblock


\bibitem[Ayitey~Junior et~al\mbox{.}(2023)]%
        {ayitey2023forex}
\bibfield{author}{\bibinfo{person}{M. Ayitey~Junior}, \bibinfo{person}{P. Appiahene}, \bibinfo{person}{O. Appiah}, {and} \bibinfo{person}{C.~N. Bombie}.} \bibinfo{year}{2023}\natexlab{}.
\newblock \showarticletitle{FOREX market forecasting using machine learning: Systematic literature review and meta-analysis}.
\newblock \bibinfo{journal}{\emph{Journal of Big Data}} \bibinfo{volume}{10}, \bibinfo{number}{1} (\bibinfo{year}{2023}), \bibinfo{pages}{9}.
\newblock


\bibitem[{Bank for International Settlements}(2022)]%
        {BIS2022fx}
\bibfield{author}{\bibinfo{person}{{Bank for International Settlements}}.} \bibinfo{year}{2022}\natexlab{}.
\newblock \bibinfo{title}{{OTC foreign exchange derivatives by instrument, counterparty and maturity}}.
\newblock \bibinfo{howpublished}{\url{https://data.bis.org/topics/DER/tables-and-dashboards/BIS,DER_D11_3,1.0}}.
\newblock
\newblock
\shownote{Accessed: Jul. 13, 2025}.


\bibitem[Bapat(2014)]%
        {bapat2014graphs}
\bibfield{author}{\bibinfo{person}{R.~B. Bapat}.} \bibinfo{year}{2014}\natexlab{}.
\newblock \bibinfo{booktitle}{\emph{Graphs and Matrices}}.
\newblock \bibinfo{publisher}{Springer London}, Chapter~2, \bibinfo{pages}{14}.
\newblock
\newblock
\shownote{Lemma 2.2}.


\bibitem[Cao et~al\mbox{.}(2020)]%
        {cao2020deep}
\bibfield{author}{\bibinfo{person}{W. Cao}, \bibinfo{person}{W. Zhu}, \bibinfo{person}{W. Wang}, \bibinfo{person}{Y. Demazeau}, {and} \bibinfo{person}{C. Zhang}.} \bibinfo{year}{2020}\natexlab{}.
\newblock \showarticletitle{A deep coupled {LSTM} approach for {USD/CNY} exchange rate forecasting}.
\newblock \bibinfo{journal}{\emph{IEEE Intelligent Systems}} \bibinfo{volume}{35}, \bibinfo{number}{2} (\bibinfo{year}{2020}), \bibinfo{pages}{43--53}.
\newblock


\bibitem[Carriero et~al\mbox{.}(2009)]%
        {carriero2009forecasting}
\bibfield{author}{\bibinfo{person}{A. Carriero}, \bibinfo{person}{G. Kapetanios}, {and} \bibinfo{person}{M. Marcellino}.} \bibinfo{year}{2009}\natexlab{}.
\newblock \showarticletitle{Forecasting exchange rates with a large Bayesian VAR}.
\newblock \bibinfo{journal}{\emph{International Journal of Forecasting}} \bibinfo{volume}{25}, \bibinfo{number}{2} (\bibinfo{year}{2009}), \bibinfo{pages}{400--417}.
\newblock


\bibitem[Cui et~al\mbox{.}(2020)]%
        {cui2020detecting}
\bibfield{author}{\bibinfo{person}{Z. Cui}, \bibinfo{person}{W. Qian}, \bibinfo{person}{S. Taylor}, {and} \bibinfo{person}{L. Zhu}.} \bibinfo{year}{2020}\natexlab{}.
\newblock \showarticletitle{Detecting and identifying arbitrage in the spot foreign exchange market}.
\newblock \bibinfo{journal}{\emph{Quantitative Finance}} \bibinfo{volume}{20}, \bibinfo{number}{1} (\bibinfo{year}{2020}), \bibinfo{pages}{119--132}.
\newblock


\bibitem[de~Almeida et~al\mbox{.}(2018)]%
        {de2018combining}
\bibfield{author}{\bibinfo{person}{B.~J. de Almeida}, \bibinfo{person}{R.~F. Neves}, {and} \bibinfo{person}{N. Horta}.} \bibinfo{year}{2018}\natexlab{}.
\newblock \showarticletitle{Combining support vector machine with genetic algorithms to optimize investments in FOREX markets with high leverage}.
\newblock \bibinfo{journal}{\emph{Applied Soft Computing}}  \bibinfo{volume}{64} (\bibinfo{year}{2018}), \bibinfo{pages}{596--613}.
\newblock


\bibitem[Du et~al\mbox{.}(2018)]%
        {du2018deviations}
\bibfield{author}{\bibinfo{person}{W. Du}, \bibinfo{person}{A. Tepper}, {and} \bibinfo{person}{A. Verdelhan}.} \bibinfo{year}{2018}\natexlab{}.
\newblock \showarticletitle{Deviations from covered interest rate parity}.
\newblock \bibinfo{journal}{\emph{The Journal of Finance}} \bibinfo{volume}{73}, \bibinfo{number}{3} (\bibinfo{year}{2018}), \bibinfo{pages}{915--957}.
\newblock


\bibitem[Fenn et~al\mbox{.}(2009)]%
        {fenn2009mirage}
\bibfield{author}{\bibinfo{person}{D.~J. Fenn}, \bibinfo{person}{S.~D. Howison}, \bibinfo{person}{M. McDonald}, \bibinfo{person}{S. Williams}, {and} \bibinfo{person}{N.~F. Johnson}.} \bibinfo{year}{2009}\natexlab{}.
\newblock \showarticletitle{The mirage of triangular arbitrage in the spot foreign exchange market}.
\newblock \bibinfo{journal}{\emph{International Journal of Theoretical and Applied Finance}} \bibinfo{volume}{12}, \bibinfo{number}{08} (\bibinfo{year}{2009}), \bibinfo{pages}{1105--1123}.
\newblock


\bibitem[Finaeon(2025)]%
        {finaeon_gfdatabase_2025}
\bibfield{author}{\bibinfo{person}{Finaeon}.} \bibinfo{year}{2025}\natexlab{}.
\newblock \bibinfo{title}{Global financial database (GFDatabase)}.
\newblock \bibinfo{howpublished}{\url{https://www.finaeon.com}}.
\newblock
\newblock
\shownote{Accessed: Apr. 17, 2025}.


\bibitem[Galeshchuk and Mukherjee(2017)]%
        {galeshchuk2017deep}
\bibfield{author}{\bibinfo{person}{S. Galeshchuk} {and} \bibinfo{person}{S. Mukherjee}.} \bibinfo{year}{2017}\natexlab{}.
\newblock \showarticletitle{Deep networks for predicting direction of change in foreign exchange rates}.
\newblock \bibinfo{journal}{\emph{Intelligent Systems in Accounting, Finance and Management}} \bibinfo{volume}{24}, \bibinfo{number}{4} (\bibinfo{year}{2017}), \bibinfo{pages}{100--110}.
\newblock


\bibitem[Islam and Hossain(2021)]%
        {islam2021foreign}
\bibfield{author}{\bibinfo{person}{M.~S. Islam} {and} \bibinfo{person}{E. Hossain}.} \bibinfo{year}{2021}\natexlab{}.
\newblock \showarticletitle{Foreign exchange currency rate prediction using a GRU-LSTM hybrid network}.
\newblock \bibinfo{journal}{\emph{Soft Computing Letters}}  \bibinfo{volume}{3} (\bibinfo{year}{2021}), \bibinfo{pages}{100009}.
\newblock


\bibitem[Langenohl(2018)]%
        {langenohl2018sources}
\bibfield{author}{\bibinfo{person}{A. Langenohl}.} \bibinfo{year}{2018}\natexlab{}.
\newblock \showarticletitle{Sources of financial synchronism: Arbitrage theory and the promise of risk-free profit}.
\newblock \bibinfo{journal}{\emph{Finance and Society}} \bibinfo{volume}{4}, \bibinfo{number}{1} (\bibinfo{year}{2018}), \bibinfo{pages}{26--40}.
\newblock


\bibitem[Mahmoodzadeh and Tseng(2020)]%
        {mahmoodzadeh2020spot}
\bibfield{author}{\bibinfo{person}{S. Mahmoodzadeh} {and} \bibinfo{person}{M.~C. Tseng}.} \bibinfo{year}{2020}\natexlab{}.
\newblock \showarticletitle{Spot arbitrage in FX market and algorithmic trading: Speed is not of the essence}.
\newblock \bibinfo{journal}{\emph{Market Microstructure and Liquidity}} \bibinfo{volume}{6}, \bibinfo{number}{01n04} (\bibinfo{year}{2020}), \bibinfo{pages}{2050011}.
\newblock


\bibitem[Miao et~al\mbox{.}(2006)]%
        {miao2006new}
\bibfield{author}{\bibinfo{person}{W. Miao}, \bibinfo{person}{Y.~R. Gel}, {and} \bibinfo{person}{J.~L. Gastwirth}.} \bibinfo{year}{2006}\natexlab{}.
\newblock \showarticletitle{A new test of symmetry about an unknown median}.
\newblock In \bibinfo{booktitle}{\emph{Random walk, sequential analysis and related topics: A festschrift in honor of Yuan-Shih Chow}}. \bibinfo{publisher}{World Scientific}, \bibinfo{pages}{199--214}.
\newblock


\bibitem[Pahlavani and Roshan(2015)]%
        {pahlavani2015comparison}
\bibfield{author}{\bibinfo{person}{M. Pahlavani} {and} \bibinfo{person}{R. Roshan}.} \bibinfo{year}{2015}\natexlab{}.
\newblock \showarticletitle{The comparison among ARIMA and hybrid ARIMA-GARCH models in forecasting the exchange rate of Iran}.
\newblock \bibinfo{journal}{\emph{International Journal of Business and Development Studies}} \bibinfo{volume}{7}, \bibinfo{number}{1} (\bibinfo{year}{2015}), \bibinfo{pages}{31--50}.
\newblock


\bibitem[Pfahler(2022)]%
        {pfahler2022exchange}
\bibfield{author}{\bibinfo{person}{J.~F. Pfahler}.} \bibinfo{year}{2022}\natexlab{}.
\newblock \showarticletitle{Exchange rate forecasting with advanced machine learning methods}.
\newblock \bibinfo{journal}{\emph{Journal of Risk and Financial Management}} \bibinfo{volume}{15}, \bibinfo{number}{1} (\bibinfo{year}{2022}), \bibinfo{pages}{2}.
\newblock


\bibitem[Pradeepkumar and Ravi(2014)]%
        {pradeepkumar2014forex}
\bibfield{author}{\bibinfo{person}{D. Pradeepkumar} {and} \bibinfo{person}{V. Ravi}.} \bibinfo{year}{2014}\natexlab{}.
\newblock \showarticletitle{FOREX rate prediction using chaos, neural network and particle swarm optimization}. In \bibinfo{booktitle}{\emph{International Conference in Swarm Intelligence}}. Springer, \bibinfo{pages}{363--375}.
\newblock


\bibitem[Rupasinghe et~al\mbox{.}(2019)]%
        {rupasinghe2019forecasting}
\bibfield{author}{\bibinfo{person}{M. Rupasinghe}, \bibinfo{person}{M.~N. Halgamuge}, {and} \bibinfo{person}{N.~T.~Q. Vinh}.} \bibinfo{year}{2019}\natexlab{}.
\newblock \showarticletitle{Forecasting trading-time based profit-making strategies in FOREX industry: Using Australian FOREX data}. In \bibinfo{booktitle}{\emph{2019 11th International Conference on Knowledge and Systems Engineering (KSE)}}. IEEE, \bibinfo{pages}{1--9}.
\newblock


\bibitem[Sharpe(1998)]%
        {sharpe1998sharpe}
\bibfield{author}{\bibinfo{person}{W.~F. Sharpe}.} \bibinfo{year}{1998}\natexlab{}.
\newblock \showarticletitle{The Sharpe ratio}.
\newblock \bibinfo{journal}{\emph{Streetwise--the Best of the Journal of Portfolio Management}} \bibinfo{volume}{3}, \bibinfo{number}{3} (\bibinfo{year}{1998}), \bibinfo{pages}{169--85}.
\newblock


\bibitem[Soon and Ye(2007)]%
        {soon2007currency}
\bibfield{author}{\bibinfo{person}{W. Soon} {and} \bibinfo{person}{H.-Q. Ye}.} \bibinfo{year}{2007}\natexlab{}.
\newblock \showarticletitle{Currency arbitrage detection using a binary integer programming model}. In \bibinfo{booktitle}{\emph{2007 IEEE International Conference on Industrial Engineering and Engineering Management}}. \bibinfo{publisher}{IEEE}, \bibinfo{address}{Piscataway, NJ, USA}, \bibinfo{pages}{867--870}.
\newblock


\bibitem[Tatsumura et~al\mbox{.}(2020)]%
        {tatsumura2020currency}
\bibfield{author}{\bibinfo{person}{K. Tatsumura}, \bibinfo{person}{R. Hidaka}, \bibinfo{person}{M. Yamasaki}, \bibinfo{person}{Y. Sakai}, {and} \bibinfo{person}{H. Goto}.} \bibinfo{year}{2020}\natexlab{}.
\newblock \showarticletitle{A currency arbitrage machine based on the simulated bifurcation algorithm for ultrafast detection of optimal opportunity}. In \bibinfo{booktitle}{\emph{2020 IEEE International Symposium on Circuits and Systems (ISCAS)}}. IEEE, \bibinfo{address}{Piscataway, NJ, USA}, \bibinfo{pages}{1--5}.
\newblock


\bibitem[Taylor(1995)]%
        {taylor1995economics}
\bibfield{author}{\bibinfo{person}{M.~P. Taylor}.} \bibinfo{year}{1995}\natexlab{}.
\newblock \showarticletitle{The economics of exchange rates}.
\newblock \bibinfo{journal}{\emph{Journal of Economic literature}} \bibinfo{volume}{33}, \bibinfo{number}{1} (\bibinfo{year}{1995}), \bibinfo{pages}{13--47}.
\newblock


\bibitem[{The Association of Banks in Singapore}(2025)]%
        {abs_k2_rates}
\bibfield{author}{\bibinfo{person}{{The Association of Banks in Singapore}}.} \bibinfo{year}{2025}\natexlab{}.
\newblock \bibinfo{title}{{Singapore benchmark rates (SORA, SIBOR, SOR)}}.
\newblock \bibinfo{howpublished}{\url{https://abs.org.sg/benchmark-rates}}.
\newblock
\newblock
\shownote{Accessed: Jul. 16, 2025}.


\bibitem[{The Hong Kong Association of Banks}(2025)]%
        {hkab_hibor}
\bibfield{author}{\bibinfo{person}{{The Hong Kong Association of Banks}}.} \bibinfo{year}{2025}\natexlab{}.
\newblock \bibinfo{title}{{Hong Kong interbank offered rate (HIBOR)}}.
\newblock \bibinfo{howpublished}{\url{https://www.hkab.org.hk/en/rates/hibor}}.
\newblock
\newblock
\shownote{Accessed: Jul. 16, 2025}.


\bibitem[Tuckman and Porfirio(2003)]%
        {tuckman2003interest}
\bibfield{author}{\bibinfo{person}{B. Tuckman} {and} \bibinfo{person}{P. Porfirio}.} \bibinfo{year}{2003}\natexlab{}.
\newblock \showarticletitle{Interest rate parity, money market basis swaps, and cross-currency basis swaps}.
\newblock \bibinfo{journal}{\emph{Fixed Income Liquid Markets Research, Lehman Brothers}}  \bibinfo{volume}{1} (\bibinfo{year}{2003}).
\newblock


\bibitem[Ullah and Huang(2025)]%
        {ullah2025forecasting}
\bibfield{author}{\bibinfo{person}{U. Ullah} {and} \bibinfo{person}{Z. Huang}.} \bibinfo{year}{2025}\natexlab{}.
\newblock \showarticletitle{Forecasting currency exchange rate through a hybrid time series analysis}.
\newblock \bibinfo{journal}{\emph{IEEE Transactions on Consumer Electronics}} (\bibinfo{year}{2025}), \bibinfo{pages}{1--1}.
\newblock


\bibitem[Varian(1987)]%
        {varian1987arbitrage}
\bibfield{author}{\bibinfo{person}{H.~R. Varian}.} \bibinfo{year}{1987}\natexlab{}.
\newblock \showarticletitle{The arbitrage principle in financial economics}.
\newblock \bibinfo{journal}{\emph{Journal of Economic Perspectives}} \bibinfo{volume}{1}, \bibinfo{number}{2} (\bibinfo{year}{1987}), \bibinfo{pages}{55--72}.
\newblock


\bibitem[{Yahoo Finance}(2025)]%
        {yahoofinance_check}
\bibfield{author}{\bibinfo{person}{{Yahoo Finance}}.} \bibinfo{year}{2025}\natexlab{}.
\newblock \bibinfo{title}{{Historical exchange rate charts}}.
\newblock \bibinfo{howpublished}{\url{https://finance.yahoo.com}}.
\newblock
\newblock
\shownote{Accessed: Jul. 16, 2025}.


\bibitem[Zhang(2025)]%
        {zhang2025efficient}
\bibfield{author}{\bibinfo{person}{D. Zhang}.} \bibinfo{year}{2025}\natexlab{}.
\newblock \bibinfo{title}{Efficient triangular arbitrage detection via graph neural networks}.
\newblock
\showeprint[arxiv]{2502.03194}


\bibitem[Zhang and Hamori(2020)]%
        {zhang2020predictability}
\bibfield{author}{\bibinfo{person}{Y. Zhang} {and} \bibinfo{person}{S. Hamori}.} \bibinfo{year}{2020}\natexlab{}.
\newblock \showarticletitle{The predictability of the exchange rate when combining machine learning and fundamental models}.
\newblock \bibinfo{journal}{\emph{Journal of Risk and Financial Management}} \bibinfo{volume}{13}, \bibinfo{number}{3} (\bibinfo{year}{2020}), \bibinfo{pages}{48}.
\newblock


\bibitem[Zhao and Khushi(2020)]%
        {zhao2020wavelet}
\bibfield{author}{\bibinfo{person}{Y. Zhao} {and} \bibinfo{person}{M. Khushi}.} \bibinfo{year}{2020}\natexlab{}.
\newblock \showarticletitle{Wavelet Denoised-ResNet CNN and LightGBM method to predict FOREX rate of change}. In \bibinfo{booktitle}{\emph{2020 International Conference on Data Mining Workshops (ICDMW)}}. IEEE, \bibinfo{pages}{385--391}.
\newblock


\end{thebibliography}
\end{document}